\newtheorem{definition}{Definition}   % gws, 20180202
\newtheorem{theorem}{Theorem}  % gws, 20180202
\newtheorem{corollary}{Corollary}   % gws, 20180202
\newenvironment{proof}{\begin{IEEEproof}}{\end{IEEEproof}}
\begin{document}

\title{Utility Mining Across Multi-Dimensional Sequences}

% author names and IEEE memberships
\author{Wensheng Gan,
	Jerry Chun-Wei Lin,
	Jiexiong Zhang,
	Hongzhi Yin,\\
	Philippe Fournier-Viger,
	Han-Chieh Chao,~\IEEEmembership{Senior,~Member}
	and Philip S. Yu,~\IEEEmembership{Fellow,~IEEE}
	\IEEEcompsocitemizethanks{
		\IEEEcompsocthanksitem Wensheng Gan is with Harbin Institute of Technology (Shenzhen), Shenzhen, China, and with University of Illinois at Chicago, IL, USA. Email: wsgan001@gmail.com
		
		\IEEEcompsocthanksitem Jerry Chun-Wei Lin is with the Western Norway University of Applied Sciences, Bergen, Norway. Email: jerrylin@ieee.org
		
		\IEEEcompsocthanksitem Jiexiong Zhang is with Harbin Institute of Technology (Shenzhen), Shenzhen, China. Email: jiexiong.zhang@foxmail.com
				
		\IEEEcompsocthanksitem Hongzhi Yin is with the University of Queensland, Brisbane, Australia. Email: h.yin1@uq.edu.au

		\IEEEcompsocthanksitem Philippe Fournier-Viger is with Harbin Institute of Technology (Shenzhen), Shenzhen, China. Email: philfv8@yahoo.com
				
		\IEEEcompsocthanksitem Han-Chieh Chao is with the National Dong Hwa University, Hualien, Taiwan. Email: hcc@ndhu.edu.tw
		
		\IEEEcompsocthanksitem Philip S. Yu is with University of Illinois at Chicago, IL, USA. Email: psyu@uic.edu}
	
}

\IEEEtitleabstractindextext{%

\begin{abstract}
 
 Knowledge extraction from database is the fundamental task in database and data mining community, which has been applied to a wide range of real-world applications and situations. Different from the support-based mining models, the utility-oriented mining framework integrates the utility theory to provide more informative and useful patterns. Time-dependent sequence data is commonly seen in real life. Sequence data has been widely utilized in many applications, such as analyzing sequential user behavior on the Web, influence maximization, route planning, and targeted marketing. Unfortunately, all the existing algorithms lose sight of the fact that the processed data not only contain rich features (e.g., occur quantity, risk, profit, etc.), but also may be associated with multi-dimensional auxiliary information, e.g., transaction sequence can be associated with purchaser profile information. In this paper, we first formulate the problem of utility mining across multi-dimensional sequences, and propose a novel framework named MDUS to extract \textit{\underline{M}}ulti-\textit{\underline{D}}imensional \textit{\underline{U}}tility-oriented \textit{\underline{S}}equential useful patterns. To the best of our knowledge, this is the first study that incorporates the time-dependent sequence-order, quantitative information, utility factor and auxiliary dimension. Two algorithms respectively named MDUS$ _\text{EM} $ and MDUS$ _\text{SD} $ are presented to address the formulated problem. The former algorithm is based on database transformation, and the later one performs pattern joins and a searching method to identify desired patterns across multi-dimensional sequences. Extensive experiments are carried on five real-life datasets and one synthetic dataset to show that the proposed algorithms can effectively and efficiently discover the useful knowledge from multi-dimensional sequential databases. Moreover, the MDUS framework can provide better insight, and it is more adaptable to real-life situations than the current existing models.
\end{abstract}

\begin{IEEEkeywords}
	Economic, sequential data, utility mining, auxiliary dimension, pruning strategies
\end{IEEEkeywords}
}
% make the title area
\maketitle

\section{Introduction}%\ref{sec:intro}

\IEEEPARstart{T}{he} rise of ubiquitous data from e-commerce sites, social network, the Internet of Things (IoT) devices, and other services and devices, has generated new opportunities for data management and data analytic \cite{chen1996data}. Increasingly, the time-dependent sequence is one of the common types of these ubiquitous data.  Up to now, data mining \cite{chen1996data,agrawal1994fast}, especially sequence mining \cite{srikant1996mining,pei2004mining,fournier2017survey} which are  the fundamental technique for discovering useful knowledge between variables in large databases, have gained a lot of attention in both the public and the research communities. In general, some implicit factors such as the utility, interestingness, and risk of objects/patterns are commonly seen in many real-world scenarios \cite{ahmed2009efficient,tseng2013efficient}. Those information is actually important and helpful for data analytic. In some real-life applications, for example, consider products sold by an e-commerce website. The purchase quantities and unit profits of products are provided in transactions, as well as other information. In this situation, the frequency/co-occurrence is inappropriate to measure the importance of the product/object since much information may be missed and discarded. However, most of the pattern mining algorithms aim at extracting correlations among patterns mainly using the support measure or others \cite{geng2006interestingness,yin2012uspan}, which does not reflect business interests.

The above challenge motivates a field called utility-oriented mining \cite{gan2018survey}, which integrates the \textit{utility} theory \cite{marshall2005principles,mcconnell2009economics} and data mining technologies. Recently, the problem of data mining with Economic theory has been brought to attention \cite{yin2012uspan,zhao2017multi,liang2017top}.  Up to now, most studies of utility mining focus on extracting profitable patterns and useful knowledge from item-based data \cite{ahmed2009efficient,tseng2013efficient,liu2012mining}. While fewer studies \cite{yin2012uspan,liang2017top,alkan2015crom,wang2016efficiently,zihayat2017mining} address the utility-driven mining and analytic problem on sequence-based data which is more commonly seen in real-life scenarios. Sequence data has more rich information (e.g., time-dependent sequence-order) and more complex than item-based data. Data analytic with sequences has been widely utilized in many applications, such as analyzing sequential user behavior on the Web \cite{pei2004mining}, route planning \cite{liang2017top}, targeted marketing \cite{yin2012uspan}, and gene regulation from microarray \cite{zihayat2017mining}.

\begin{table*}[!htbp]
	\setlength{\abovecaptionskip}{5pt}
	\setlength{\belowcaptionskip}{5pt} 
	\caption{Purchase log with multi-dimensional features.}
	\centering
	
	\begin{tabular}{p{5pt}|ccccccc}
		\hline
		\textbf{SID} & \textbf{Time} & \textbf{Place}  & \textbf{Customer} & \textbf{Sex} & \textbf{Age} 	 & \textbf{Q-sequence} \\ \hline
		$ S_{1} $  & 5/2/2017 09:31 & Store &*** & Male    & Young    & $ < $[(\textit{a}:1) (\textit{c}:3)], [(\textit{a}:5) (\textit{c}:1) (\textit{e}:4)], [(\textit{c}:2)], [(\textit{b}:1)]$ > $  \\  \hdashline
		$ S_{2} $  & 5/2/2017 10:02  & Supermarket  &***  & Female  & Middle   & $ < $[(\textit{c}:1)], [(\textit{b}:4)], [(\textit{b}:9) (\textit{d}:8)], [(\textit{b}:9) (\textit{e}:6)]$ > $  \\  \hdashline
        $ S_{3} $  & 5/5/2017 12:25  & Supermarket  &***  & Male    & Young    & $ < $[(\textit{a}:10) (\textit{d}:5)]$ > $  \\ \hdashline
		$ S_{4} $   & 5/7/2017 10:30  &  Drugstore &***  & Male    & Young   & $ < $[(\textit{a}:3) (\textit{b}:4) (\textit{d}:2) (\textit{e}:6)], [(\textit{b}:3) (\textit{c}:2)]$ > $  \\ \hdashline
		$ S_{5} $   & 5/7/2017 16:58 &  Supermarket  &***  & Female  & Old      & $ < $[(\textit{e}:4)], [(\textit{d}:7)], [(\textit{c}:5)], [(\textit{a}:9) (\textit{b}:3) (\textit{c}:7) (\textit{d}:7)]$ > $  \\ \hline
	\end{tabular}
	\label{table:db}
\end{table*}

Up to now, most of the mining methods for sequence data aim at extracting correlations within a single attribute in a relation of order \cite{srikant1996mining,pei2004mining,zaki2001spade}. In real-life situations, sequence data may be associated with auxiliary information. For instance, a multi-dimensional purchase log is shown in Table \ref{table:db}. It contains 5 sequences/records, 5 items and several dimensions. Each sequence is associated with the purchaser profile information, such as \textit{Time}, \textit{Place}, \textit{Customer}, \textit{Set}, \textit{Age}, etc. This additional customer profile information provides rich multi-dimensional features to the sequence. Multi-dimensional sequence data is commonly seen in many real-life scenarios, such as market basket analysis, medical data for healthcare,  user behavior analysis, and event/route planning. For example, the medical data has many multi-dimensionality with rich information, and different disease (e.g., cough, asthma, pneumonia, dermatitis) should not be treated  equally since they have different risk/weight.  Traditional pattern mining algorithms cannot deal with those multi-dimensional features of the sequence. Pinto et al. \cite{pinto2001multi} first introduced the notion of multi-dimensionality in a sequence, and several algorithms were proposed to mine this type of knowledge \cite{yu2005mining,raissi2008mining,plantevit2010mining}. Unfortunately, all of the existing algorithms lose sight of the fact that the utility factor is quite important in market economics and data mining.

In the last two decades, database and data mining had been widely studied and applied to different domains, and some technologies were developed. While there are many advances in utility mining \cite{tseng2013efficient,yin2012uspan,liang2017top,zhao2017multi}, the characteristics of these data \cite{pinto2001multi} - complexity, volume, multi-dimensionality, etc. - still demand more efficient and effective techniques. Despite the encouraging applications of multi-dimensional data \cite{yu2005mining,raissi2008mining,plantevit2010mining}, none previous approach was dedicated to explore the multi-dimensionality of data by considering the time-dependent sequence-order, quantitative information, utility factor and auxiliary dimension.  We face several important questions: (1) What should be a sound definition of multi-dimensional utility-oriented sequential useful pattern? (2) How to formulate the problem of utility mining across multi-dimensional sequences? (3) How to evaluate these patterns efficiently? More specially, addressing this topic is not an easy task due to the following technical challenges.

First, it needs to overcome the large search space problem due to combinatorial explosion of sequences. However, the downward closure property does not hold for the utility of sequences. That is, the utility of a sequence may be higher than, equal to, or lower than that of its super/sub-sequences \cite{yin2012uspan,alkan2015crom,wang2016efficiently,zihayat2017mining,ahmed2010novel}. Thus, search space pruning techniques that rely on the downward closure property cannot be used to prune the search space of the addressed problem.

Second, in the multi-dimensional sequences, different items can occur simultaneously. This is substantially different and much more challenging than mining the patterns from a single sequence or multi-sequences without attribution dimensionality. Since items with different quantities and unit profits can occur simultaneously in any sequence record, the search space is much larger and the problem is much more challenging than previous studies.

Third, comparing to mining HUSPs from a normal dataset, mining HUSPs across multi-dimensional sequences has far more information to track and far greater complexity to manage. How to efficiently discover correct HUSPs to achieve utility maximization across multi-dimensional sequences is a challenging problem.

Therefore, efficient methods for utility-oriented sequence mining containing two or more quantitative dimensions is a non-trivial task. In this paper, we propose a new utility-oriented mining framework named \textbf{\underline{M}}ulti-\textbf{\underline{D}}imensional \textbf{\underline{U}}tility-oriented \textbf{\underline{S}}equential pattern mining (abbreviated as MDUS). In the designed framework, attributes are attached to transactions as dimensions. It incorporates the sequence-order information, utility factor and dimension information. In summary, this paper makes the following contributions:

\begin{enumerate}
	\item  \textbf{Dimension-based formulation.} We first formulate a novel MDUS framework to discover multi-dimensional sequential patterns, which maximize the total utility. As a sequence-specific mining framework, MDUS integrates the Economic utility theory, data mining technology and dimension information of data.  It provides better insight and is more adapted to real-life situations than existing studies.  

	\item  \textbf{More flexible and adaptable with tailored objective.} The newly formulated MDUS model can capture a partial or biased view of an individual domain analysis or application. When one or more dimensions of information is mined, it is specific to a particular domain or application type, and can give us more informative and useful patterns. 

	\item  \textbf{Two algorithms named MDUS$ _\text{EM} $ and MDUS$ _\text{SD} $ are presented.} The former algorithm is based on database transformation and the indexing utility-linked list. And the later one performs pattern joins and a new search approach to effectively identify high utility sequential patterns.

	\item  \textbf{Realistic and synthetic datasets were used in experiments.} Experiments conducted on six real-life datasets and one synthetic dataset show that the proposed algorithms have good performance  in terms of effectiveness and efficiency by processing multi-dimensional sequences. 
\end{enumerate}

The rest of this paper is organized as follows. In Section \ref{sec:preliminary}, basic preliminaries are introduced, and the problem of MDUS is formulated. The proposed two algorithms are described in Section \ref{sec:algorithm1} and Section \ref{sec:algorithm2}, respectively. Experimental results and analysis are presented in Section \ref{sec:experiments}. Related work is reviewed in Section \ref{sec:relatedwork}. At last, conclusions and future work are drawn in Section \ref{sec:conclusion}.

\section{Related Work}  
\label{sec:relatedwork}

This work is related to three areas: pattern mining from multi-dimensional sequence data, utility mining on itemset-based data, and utility mining on sequence data.

\subsection{Data Mining from Multi-dimensional Sequences} 
There are a lot of existing studies in the literature about finding the ``interesting'' sequential patterns from sequences when the minimum support threshold is given. Recently, Fournier-Viger et al. presented a comprehensive review of sequential pattern mining \cite{fournier2017survey}. However, most of the existing studies can not deal with the more realistic problem that mining interesting patterns from multi-dimensional sequences.

Pinto et al. \cite{pinto2001multi}  first addressed the problem to discover sequential patterns in a multi-dimensional space, where frequent sequences are associated with transactions to indicate their context. Three algorithms, named UniSeq, Seq-Dim, and Dim-Seq, were designed. Yu and Chen then proposed the AprioriMD and PrefixMDSpan \cite{yu2005mining} algorithms. To discover multi-dimensional sequential patterns in data streams, Ra{\"{\i}}ssi et al. \cite{raissi2008mining} proposed the MDSDS algorithm. It extracts the most specific multi-dimensional items from the prefix-tree in each window, and then uses the extracted items to mine frequent sequences using the PrefixSpan algorithm \cite{pei2004mining}. Plantevit et al. \cite{plantevit2010mining} then proposed a model for mining multi-dimensional and multi-level sequential patterns with multiple levels of granularities. Other algorithms have been proposed to process multi-dimensional data \cite{plantevit2010mining}. Unfortunately, all of these algorithms lose sight of the fact that the utility factor is quite important in real-word applications, such as market economics and utility maximization. Besides, the downward closure property using in these support-based algorithms does not hold for the utility of sequences. In other words, their search space pruning techniques cannot be used to prune the search space of the addressed MDUS problem.

\subsection{Utility Mining on Itemset-based Data}
Chan et al. \cite{chan2003mining} first incorporated the utility theory in a pattern mining task, and introduced the main concept of high-utility itemset mining (HUIM). Based on previous studies, Yao et al. \cite{yao2004foundational} then introduced a more restrictive definition of HUIM. Liu et al. \cite{liu2005two} developed a Two-Phase algorithm using level-wise Apriori-like mechanism. The proposed  transaction-weighted downward closure (TWDC) property can be hold although utility mining does not hold the downward closure property of Apriori \cite{agrawal1994fast}. Although this Two-Phase algorithm can completely discover the required results, it is still a Apriori-like model. To speed up the mining performance, several tree-based algorithms have been designed, such as IHUP \cite{ahmed2009efficient}, HUP-tree \cite{lin2011effective}, UP-Growth \cite{tseng2010up}, and UP-Growth$ ^{+} $ \cite{tseng2013efficient}. However, the tree-based models always consume a huge amount of memory, and need to generate a series of conditional subtrees. Therefore, a novel vertical data structure named utility-list was developed to efficiently discover the set of high-utility itemsets (HUIs), without multiple database scans. Besides, other more efficient algorithms (e.g., d$ ^{2} $HUP \cite{liu2016mining}, FHM \cite{fournier2014fhm}, and EFIM \cite{zida2015efim}) have been extensively studied.

Different from the above algorithms which are focused on improving the mining efficiency, there are many models and algorithms of utility mining \cite{gan2018survey} have been extensively studied to extend the mining effectiveness. For example, Tseng et al. studied the concise representation issue \cite{tseng2015efficient} and top-\textit{k} issue \cite{tseng2016efficient} for mining HUIs.  Several models are introduced to deal with different types of quantitative data, such as mining HUIs from uncertain data \cite{lin2016efficient}, temporal data \cite{lin2015efficient}, and dynamic data \cite{lin2015fast}. Up to now, there are also many other interesting issues for utility mining, such as on-shelf HUI \cite{lan2011discovery}, utility mining with discount strategies \cite{lin2016fast} or negative values \cite{lin2016fhn}, high-utility association rule \cite{mai2017lattice}, high utility occupancy pattern \cite{gan2018huopm},  and HUIM with multiple minimum utility thresholds \cite{2lin2016efficient}. Recently, Ryang at al. addressed the issues that mining HUIs from data stream \cite{ryang2016high}, and Yun et al. extended the utility mining technique for establishing manufacturing plan \cite{yun2017efficient}. A comprehensive survey literature of utility-oriented pattern mining can be referred to \cite{gan2018survey}.

\subsection{Utility Mining on Sequence Data}

Data mining on sequence, specially sequential pattern mining (e.g., \cite{chen1996data,srikant1996mining,pei2004mining}) provides more information to handle the order-based applications, such as behavior analysis, DNA sequence analysis, and weblog mining \cite{fournier2017survey,zaki2001spade}. It have been extensively studied, however it uses the frequency/support measure to mine desired sequences, which does not reflect business interests. Thus, a novel utility-oriented mining framework named high-utility sequential pattern mining (HUSPM) \cite{yin2012uspan,alkan2015crom,wang2016efficiently} is developed.

HUSPM considers ordered sequences and reveals the utilities of sequences. It has been widely studied and successfully applied to many applications, such as find useful episode rule from event sequences \cite{lin2015discovering,ao2018mining}, extract mobile sequential patterns \cite{shie2011mining}, route search with POI features \cite{liang2017top}, and discover high-utility gene regulation from microarray datasets \cite{zihayat2017mining}. Ahmed et al. \cite{ahmed2010novel} first designed a level-wise UL approach  and a pattern-growth US approach  for HUSPM. However, both UL and US can only handle simple sequences. Yin et al.~\cite{yin2012uspan} then introduced the formal framework of HUSPM and proposed an efficient USpan algorithm. Lan et al.~\cite{lan2014applying} then proposed a projection-based approach with a sequence-utility upper-bound with maximum utility measure. To improve the mining performance, Alkan et al.~\cite{alkan2015crom} proposed the Cumulated Rest of Match (CRoM) as the upper-bound value to early prune unpromising candidates. Two tight upper-bounds on utility, named prefix extension utility (PEU) and reduced sequence utility (RSU), are developed to speed up mining process \cite{wang2016efficiently}. Recently, two novel dynamic models, named  IncUSP-Miner \cite{wang2016incremental} and IncUSP-Miner+ \cite{wang2018incremental}, were proposed to incrementally mining high-utility sequential patterns. A comprehensive review of utility-oriented mining can be referred to \cite{2gan2018survey,gan2018survey}. Unfortunately, they are all one-dimensional pattern mining methods. None of them can be applied to identify high-utility patterns from multi-dimensional sequence data.

\section{Problem Formulation}\label{sec:preliminary}

\subsection{Utility Theory}
In Economics, \textit{utility} is a measure of the satisfaction or pleasure that a person gets from consuming a good or service \cite{marshall2005principles}. It is a basic building block of rational choice theory \cite{coleman1992rational}. A consumer's total utility for a given set of goods is the consumer's total amount of satisfaction experienced from consuming these goods as a whole. In the context of purchasing decisions, we assume that the consumer has accessed to a set of products, and each product has its own price value. In general, utility for a single product is a function of the consumed quantity. \textit{Utility} is subjective and difficult to quantify, and it typically obeys the \textit{Law Of Diminishing Marginal Utility} \cite{mcconnell2009economics}. In Economics, the utility that a consumer/seller has for a product can be decomposed into a set of utilities for each product characteristic. The utility maximization model provides insights into the effect of data mining, and it is useful for analyzing user behavior, influence maximization and targeted marketing \cite{zhao2017multi}.

\subsection{Utility Mining Across Multi-Dimensional Sequences}

In this section, we will first present the notations, as summarized in Table I. Then we will briefly introduce the conceptual paradigm of utility-oriented mining across multi-dimensional sequences. 

\begin{table}[!htbp]
	\centering
%	\scriptsize %\small
	\caption{Summary of symbols and their meanings}
	\label{table_Notation}
	\begin{tabular}{|c|l|}
		\hline
		% \textbf{Notation} & \textbf{Meaning}  \\ \hline
		\textbf{Symbol} & \textbf{Definition}  \\ \hline
		$I$ &  A finite set of distinct items in a database. \\ \hline
		
		$  q_{m} $ & An internal utility $  q_{m} $ (e.g., \textit{purchase quantity}). \\ \hline

		$pr(i_j)$ &   The predefined unit profit of an item $i_j$. \\ \hline
			
		\textit{v} & A \textit{quantitative itemset}  \textit{v} = [($ i_{1} $: $ q_{1} $) ($ i_{2} $: $ q_{2} $) $ \dots $ ($ i_{m} $: $  q_{m} $)]. \\ \hline
				
		$ s^{q} $	&  A \textit{quantitative sequence} is an ordered list. \\ \hline		
		
		$QSD$ &  A quantitative sequential database = \{\textit{S}$_{1}$, \textit{S}$_{2}$, $\ldots$, \textit{S$_{m}$}\}.  \\ \hline

     	$\delta$ &   A predefined minimum high-utility threshold. \\ \hline
		
		$u(QSD)$ &   Overall utility of a multi-dimensional \textit{QSD}. \\ \hline
				
		\textit{HUI} &   A high-utility itemset. \\ \hline	
		
		\textit{HUSP} &   A high-utility sequential pattern. \\ \hline
		
	    \textit{mdHUSP} &   	A multi-dimensional HUSP. \\ \hline
	 
		\textit{HUSPM} &   High-utility sequential pattern mining. \\ \hline
				
		length-$k$ pattern &  A pattern with the number of items/sequences (w.r.t. $k$). \\ \hline

        LS-tree   &  A lexicographic sequence. \\ \hline 
        
        UL-list    &  The utility-linked list structure. \\ \hline 
		
	\end{tabular}
\end{table}

Assume that items in an itemset \textit{v} (quantitative itemset) are unordered. For example, [\textit{a} \textit{b}] is the same as [\textit{b} \textit{a}]. Without loss of generality, we assume that items in an itemset (quantitative itemset) are listed in \textit{alphabetical} order. A \textit{sequence} is an ordered list of one or more itemsets without quantities, which is denoted as $ s $ = $<$$w_{1} $, $ w_{2} $, $\dots$, $ w_{m}$$>$. A \textit{quantitative sequence} is an ordered list of one or more quantitative itemsets, which is denoted as $ s^{q} $ = $<$$ v_{1} $, $ v_{2} $, $\dots$, $ v_{m}$$>$. For convenience, we use ``\textit{q}-" as an abbreviation of ``quantitative". Consequently, in the following, ``\textit{q}-itemset" denotes  an itemset with quantities, and ``itemset" denotes an itemset without quantities. Similarly, a ``\textit{q}-sequence" is a sequence with quantities, and ``sequence" denotes a sequence without quantities. For example, $ < $[(\textit{a}: 1) (\textit{b}: 2)], [(\textit{c}: 3)]$ > $ is a \textit{q}-sequence and $ < $[\textit{a b}], [\textit{c}]$ > $ is a sequence. [(\textit{a}: 1) (\textit{b}: 2)] is a \textit{q}-itemset and [\textit{a b}] is an itemset.

As a \textit{multi-dimensional sequence} \cite{pinto2001multi}, $ t $ has the form  $ (d_{1}, d_{2}, \dots, d_{m}, s) $, where $ d_{i} $ is the i-\textit{th} dimension value and $ s $ is a sequence. A multi-dimensional \textit{q}-sequence $ t^{q} $ has the form $ (d_{1}, d_{2}, \dots, d_{m}, s^{q}) $, where $ d_{i} $ is the i-\textit{th} dimension value and $ s^{q} $ is a \textit{q}-sequence. A multi-dimensional quantitative sequential database, defined as \textit{QSD} = \{$ S_{1} $, $ S_{2} $, $\dots$, $ S_{n} $\},  is a set of multi-dimensional \textit{q}-sequences. A multi-dimensional \textit{q}-sequence $ S_{i}\in QSD $ is also called a \textit{transaction}. For each transaction, a unique sequence identifier named \textit{SID} is attached to it. Besides, each item in \textit{QSD} is associated with an \textit{external utility} (e.g., representing its unit profit), which is denoted as $ pr(i_{j}) $. By taking into account the multi-dimensional sequences, we define the related concepts as follows.

As a running example, a simple multi-dimensional \textit{QSD} is shown in Table \ref{table:db}. It contains 5 sequences/records, 5 items and several dimensions. Assume that the user-specified \textit{utility-table} \{\textit{a}: \$4; \textit{b}: \$3; \textit{c}: \$5; \textit{d}: \$2; \textit{e}: \$1\} indicates the unit profits of all items in \textit{QSD}. It can be seen that \textit{S}$ _{1} $ is the first transaction, which can be viewed as a \textit{q}-sequence having three dimensions. The interpretation of $ S_{1}$ is that a young male doctor bought products \textit{a} and \textit{c}, followed by buying \textit{a}, \textit{c} and \textit{e}, then \textit{c}, and finally \textit{b}. Besides, [(\textit{a}:1) (\textit{c}:3)] is the first \textit{q}-itemset in $ S_{1}$. The quantity of an item (\textit{a}) in this \textit{q}-itemset is 1, and its utility is calculated as 1 $\times $ \$4 = \$4.

%%%%%%%% definition item utility %%%%%%%%%
%%%%%%%%%%%%%%%%%%%%%%%%%%%%%%%%%%%%%%%%%%
%%%%%%%%%%%%%%%%%%%%%%%%%%%%%%%%%%%%%%%%%%
%%%%%%%%%%%%%%%%%%%%%%%%%%%%%%%%%%%%%%%%%%
\begin{definition}
	Let $ u(i_{j}, v) $ denote the utility of an item ($ i_{j} $) in a \textit{q}-itemset \textit{v}, and it is defined as $u(i_{j}, v) = q(i_{j}, v)\times pr(i_{j})$, where $ q(i_{j}, v) $ is the quantity of ($ i_{j} $) in $ v $, and $ pr(i_{j}) $ is the profit of ($ i_{j} $). Let $ u(v) $ denote the utility of a \textit{q}-itemset $ v $, and it can be defined as $u(v) = \sum_{i_{j}\in v}u(i_{j}, v)$.
\end{definition} 

For example in Table \ref{table:db}, the utility of an item (\textit{c}) in the first \textit{q}-itemset of $ S_{1} $ is calculated as: \textit{u}(\textit{c}, [(\textit{a}:1) (\textit{c}:3)]) = \textit{q}(\textit{c}, [(\textit{a}:1) (\textit{c}:3)]) $\times pr(c)$ = 3 $\times$ \$5 = \$15. Thus, \textit{u}([(\textit{a}:1) (\textit{c}:3)]) = \textit{u}(\textit{a}, [(\textit{a}:1) (\textit{c}:3)]) + \textit{u}(\textit{c}, [(\textit{a}:1) (\textit{c}:3)]) = 1$\times$ \$4 + 3$\times$ \$5 = \$19.

%%%%%%%% definition sequence utility %%%%%
%%%%%%%%%%%%%%%%%%%%%%%%%%%%%%%%%%%%%%%%%%
%%%%%%%%%%%%%%%%%%%%%%%%%%%%%%%%%%%%%%%%%%
%%%%%%%%%%%%%%%%%%%%%%%%%%%%%%%%%%%%%%%%%%
\begin{definition}	
	Let  $ u(t^{q}) $ denote the utility of a multi-dimensional \textit{q}-sequence $ t^{q} $ = $ (d_{1}, d_{2}, \dots, d_{m}, s^{q}) $, and it is defined as the utility of the \textit{q}-sequence $ s^{q} $, such as $u(t^{q}) = u(s^{q}) = \sum_{v\in s^{q}}u(v)$.  Then the overall utility of a multi-dimensional \textit{QSD} can be calculated as $u(QSD) = \sum_{t^{q}\in QSD}u(t^{q})$. 
	
\end{definition}

For example in Table \ref{table:db}, $ u(S_{1}) $ = \textit{u}([(\textit{a}:1) (\textit{c}:3)]) + \textit{u}([(\textit{a}:5) (\textit{c}:1) (\textit{e}:4)]) + \textit{u}([(\textit{c}:2)]) + \textit{u}([(\textit{b}:1)]) = \$19 + \$29 + \$10 + \$3 = \$61, and the utility of \textit{QSD} is $ u$(\textit{QSD}) = $ u(S_{1}) $ + $ u(S_{2}) $ + $ u(S_{3}) $ + $ u(S_{4}) $ + $ u(S_{5}) $ = \$61 + \$93 + \$50 + \$53 + \$137 = \$394.

%%%%%%%% definition length %%%%%%%%%%%%%%%
%%%%%%%%%%%%%%%%%%%%%%%%%%%%%%%%%%%%%%%%%%
%%%%%%%%%%%%%%%%%%%%%%%%%%%%%%%%%%%%%%%%%%
%%%%%%%%%%%%%%%%%%%%%%%%%%%%%%%%%%%%%%%%%%
\begin{definition}
	An itemset (\textit{q}-itemset) containing \textit{k} items is called a \textit{k}-itemset (\textit{k}-\textit{q}-itemset). A sequence (\textit{q}-sequence) containing \textit{k} items is said to be a \textit{k}-sequence (\textit{k}-\textit{q}-sequence). 
\end{definition}

%%%%%%%% definition itemset contain %%%%%%
%%%%%%%%%%%%%%%%%%%%%%%%%%%%%%%%%%%%%%%%%%
%%%%%%%%%%%%%%%%%%%%%%%%%%%%%%%%%%%%%%%%%%
%%%%%%%%%%%%%%%%%%%%%%%%%%%%%%%%%%%%%%%%%%
\begin{definition}
	Given two itemsets $ w $ and $ w' $, $ w $ is said to be contained in $ w' $, denoted as $ w\subseteq w' $, if $ w $ is a subset of $ w' $ or $ w = w' $. Given two \textit{q}-itemsets $ v $ and $ v' $, $ v $ is said to be contained in $ v' $ as $ v\subseteq v' $ if for any item in $ v $, there exists the same item having the same quantity in $ v' $.
\end{definition}

For example, the \textit{q}-sequence of $ S_{1} $ is a 7-\textit{q}-sequence and its first \textit{q}-itemset is a 2-\textit{q}-itemset. An itemset [\textit{a b}] is contained in the itemset [\textit{a b c}]. The \textit{q}-itemset [(\textit{a}:1) (\textit{c}:3)] is contained in [\textbf{(\textit{a}:1)} (\textit{b}:1) \textbf{(\textit{c}:3)}] and [\textbf{(\textit{a}:1)} \textbf{(\textit{c}:3)} (\textit{e}:2)], but is not contained in [(\textit{a}:1) (\textit{b}:3) (\textit{c}:1)] and [(\textit{a}:4) (\textit{c}:3) (\textit{d}:4)].

%%%%%%%% definition sequence contain %%%%%
%%%%%%%%%%%%%%%%%%%%%%%%%%%%%%%%%%%%%%%%%%
%%%%%%%%%%%%%%%%%%%%%%%%%%%%%%%%%%%%%%%%%%
%%%%%%%%%%%%%%%%%%%%%%%%%%%%%%%%%%%%%%%%%%
\begin{definition}
	Given two sequences $ s $ = $<$$w_{1}, w_{2}, \dots, w_{m}$$>$ and $ s' $ = $<$$w'_{1}, w'_{2}, \dots, w'_{m'}$$>$, $ s $ is said to be contained in $ s' $, denoted as $ s\subseteq s'$, if there exists an integer sequence $ 1\leq k_{1}\leq k_{2}\leq\dots\leq m' $ such that $ w_{j}\subseteq w'_{k_{j}} $ for $ 1\leq j\leq m $. 
	Given two \textit{q}-sequences $ s^{q} $ = $<$$v_{1}, v_{2}, \dots, v_{m}$$>$ and $ s^{q'} $ = $<$$v'_{1}, v'_{2}, \dots, v'_{m'}$$>$, $ s^{q} $ is said to be contained in $s^{q'}$, denoted  as $ s^{q}\subseteq s^{q'} $, if there exists an integer sequence $ 1\leq k_{1}\leq k_{2} \leq\dots\leq m' $ such that $ v_{j}\subseteq v'_{k_{j}} $ for $ 1\leq j \leq m $.
\end{definition}	

For example, the sequence $<$[\textit{a}], [\textit{c}]$>$ is contained in $<$[\textit{a}], [\textit{c}], [\textit{d}]$>$. And $<$[(\textit{a}:1)], [(\textit{c}:1)]$>$ is contained in the \textit{q}-sequence of $ S_{1} $, but $<$[(\textit{a}:1)], [\textit{c}:3]$>$ is not contained in the \textit{q}-sequence of $ S_{1} $.

\subsection{Modeling Sequence Dimensionality Requirement}

To address the personalized sequence dimensionality requirement, we modeling a new framework by considering utility factor, time-dependent order, event/sequence and semantic dimensionality which embedding in multi-dimensional sequences. We aim to provide a general solution to the utility mining across multi-dimensional sequences, such as targeted marketing, influence maximization, Web data analytic, and the route search problem.

%%%%%%%% definition md-sequence contain %% 
%%%%%%%%%%%%%%%%%%%%%%%%%%%%%%%%%%%%%%%%%%
%%%%%%%%%%%%%%%%%%%%%%%%%%%%%%%%%%%%%%%%%%
%%%%%%%%%%%%%%%%%%%%%%%%%%%%%%%%%%%%%%%%%%
\begin{definition}
	Given two multi-dimensional sequences $ t $ = $ (d_{1}$, $d_{2}$, $\dots$, $d_{m}, s) $ and $ t' $ = $ (d'_{1}$, $d'_{2}$, $\dots$, $d'_{m}, s') $, $ t $ is said to be contained in $ t' $, and denoted as $ t\subseteq t $' if $ d_{i} $ = $ d'_{i} $ or $ d'_{i} $ = * for $ 1 \leq\ i \leq m $ and $ s \subseteq s' $.
	Given two multi-dimensional \textit{q}-sequences $ t^{q} $ = $ (d_{1}$, $d_{2}$, $\dots$, $d_{m}, s^{q}) $ and $ t^{q'} $ = $ (d'_{1}$, $d'_{2}$, $\dots$, $d'_{m}$, $s^{q'}) $, $ t^{q} $ is said to be contained in $ t^{q'} $ and denoted as $ t^{q}\subseteq t^{q'} $ if $ d_{i} $ = $ d'_{i} $ or $ d'_{i} $ = * for $ 1 \leq\ i \leq m $ and $ s^{q} \subseteq s^{q'} $. Note that * indicates that $ d'_{i} $  can take any value.
\end{definition}	

Consider the sequence (\textit{Male}, \textit{Young}, \textit{Doctor}, $ < $\textit{a}, \textit{b}$ > $), it is contained in (\textit{Male}, \textit{Young}, *, $ < $\textit{a}, \textit{b}$ > $) and (\textit{Male}, \textit{Young}, \textit{Doctor}, $ < $\textit{a}, \textit{b}, \textit{c}$ > $). However, it is not contained in (*, \textit{Middle}, \textit{Doctor}, $ < $\textit{a}, \textit{b}$ > $) and (\textit{Male}, *, \textit{Doctor}, $ < $\textit{a}, \textit{c}, \textit{d}$ > $).

%%%%%%%% definition sequence match %%%%%%%
%%%%%%%%%%%%%%%%%%%%%%%%%%%%%%%%%%%%%%%%%%
%%%%%%%%%%%%%%%%%%%%%%%%%%%%%%%%%%%%%%%%%%
%%%%%%%%%%%%%%%%%%%%%%%%%%%%%%%%%%%%%%%%%%
\begin{definition}
	Given a \textit{q}-sequence $ s^{q} $ = $<$$v_{1}, v_{2}, \dots, v_{m}$$>$ and a sequence $ s $ = $<$$w_{1}, w_{2}, \dots, w_{m'}$$>$ if $ m $ = $ m' $ and the items in $ v_{k} $ are the same as the items in $ w_{k} $ for $ 1\leq k\leq m $. Then, $ s $ matches $ s^{q} $, which is denoted as $ s\sim s^{q} $.	
	Given a multi-dimensional \textit{q}-sequence $ t^{q} $ = $ (d_{1}, d_{2}, \dots, d_{m}, s^{q}) $ and a multi-dimensional sequence $ t $ = $ (d'_{1}, d'_{2}, \dots, d'_{m}, s) $, if $ d_{i}=d'_{i} $ for $ 1\leq i\leq m $ and $ s $ matches $ s^{q} $, then $ t $ matches $ t^{q} $ which is denoted as $ t\sim t^{q} $.	
\end{definition}

For example, (\textit{Male}, \textit{Young}, \textit{Doctor}, $<$[\textit{a c}], [\textit{a c e}], [\textit{c}], [\textit{b}]$>$) matches $ S_{1} $. Moreover, $<$[\textit{a c}], [\textit{a c e}], [\textit{c}], [\textit{b}]$>$ matches the \textit{q}-sequence of $ S_{1} $. Note that a sequence may have multiple matches in a \textit{q}-sequence.  For example, $<$[\textit{a}], [\textit{c}]$>$ has three matches that are $<$[\textit{a}:1], [\textit{c}:1]$>$, $<$[\textit{a}:1], [\textit{c}:2]$>$ and $<$[\textit{a}:5], [\textit{c}:2]$>$ in the \textit{q}-sequence of $ S_{1} $. This is one of the reasons why high-utility sequential pattern mining is more challenging.

%%%%%%%% definition u(t, t^{q}) %%%%%%%%%%
%%%%%%%% definition u(t) %%%%%%%%%%%%%%%%%
%%%%%%%%%%%%%%%%%%%%%%%%%%%%%%%%%%%%%%%%%%
%%%%%%%%%%%%%%%%%%%%%%%%%%%%%%%%%%%%%%%%%%
\begin{definition}
	Let $ u(t, t^{q}) $ and $ u(t) $ denote the overall utility of a multi-dimensional sequence $ t $ in a multi-dimensional \textit{q}-sequence $ t^{q} $ and a \textit{QSD}, respectively. They can be respectively calculated as $u(t, t^{q})$ = $max\{u(t^{q'})$$|t \sim t^{q'}$$ \wedge t^{q'} \subseteq t^{q}\}$ and  $ u(t)$ = $\sum_{t^{q}\in QSD} \{u(t,t^{q}) | t \subseteq t^{q}\}$. Note that for convenience, we use $ t \subseteq t^{q} $ to indicate that $ t \sim t^{q'} \wedge t^{q'} \subseteq t^{q} $. 
\end{definition}

For example in Table \ref{table:db}, \textit{u}((\textit{Male}, \textit{Young}, \textit{Doctor}, $ < $[\textit{a}], [\textit{c}]$ > $), \textit{S}$_{1} $)  = $ max$\{\textit{u}($<$[\textit{a}:1], [\textit{c}:1]$>$), \textit{u}($<$[\textit{a}:1], [\textit{c}:2]$>$), \textit{u}($<$[\textit{a}:5], [\textit{c}:2]$>$)\} = $ max $\{\$9, \$14, \$30\} = \$30. However, \textit{u}((\textit{Female}, \textit{Young}, \textit{Doctor}, $ < $[\textit{a}], [\textit{c}]$ > $), \textit{S}$_{1} $) = \$0 since there are no matches of this pattern in \textit{S}$ _{1} $. And \textit{u}((\textit{Male}, \textit{Young}, *, $<$[\textit{a}], [\textit{c}]$>$)) = \textit{u}((\textit{Male}, \textit{Young}, *, $<$[\textit{a}], [\textit{c}]$>$), $S_{1}$) + \textit{u}((\textit{Male}, \textit{Young}, *, $<$[\textit{a}], [\textit{c}]$>$), $S_{4}$) = \$30 + \$22 = \$52. From this example, we can see that a (multi-dimensional) sequence has multiple utility values in a (multi-dimensional) \textit{q}-sequence, which is noticeably different from frequent sequential pattern mining.

%%%%%%%%%%%%% definition mdHUSP %%%%%%%%%%
%%%%%%%%%%%%%%%%%%%%%%%%%%%%%%%%%%%%%%%%%%
%%%%%%%%%%%%%%%%%%%%%%%%%%%%%%%%%%%%%%%%%%
%%%%%%%%%%%%%%%%%%%%%%%%%%%%%%%%%%%%%%%%%%
\begin{definition}%[Multi-dimensional High-Utility Sequential Pattern, mdHUSP]
	Given a multi-dimensional \textit{QSD}, a multi-dimensional sequence $ t $ in \textit{QSD} is said to be a multi-dimensional high-utility sequential pattern (abbreviated as \textit{mdHUSP}) if its utility satisfies $u(t)\geq u(QSD) \times \delta$, where $\delta$ is a user-defined minimum utility threshold based on the prior knowledge. % $mdHUSP\gets\{t|u(t)\geq u(QSD) \times \delta\}$
\end{definition}

For example in Table \ref{table:db}, if $\delta$ is set to 0.1, (\textit{Male}, \textit{Young}, *, $<$[\textit{a}], [\textit{c}]$>$) is a mdHUSP since \textit{u}((\textit{Male}, \textit{Young}, *, $<$[\textit{a}], [\textit{c}]$>$)) (= \$52) $ > $ $ u(QSD) \times \theta $ (= \$394 $ \times $ 0.1). Based on the above definitions, the problem of utility mining across multi-dimensional sequences by discovering \textbf{\underline{M}}ulti-\textbf{\underline{D}}imensional \textbf{\underline{U}}tility-oriented \textbf{\underline{S}}equential patterns (abbreviated as MDUS) can be defined as follows.

\textbf{Problem Statement:} Given a multi-dimensional \textit{QSD} and a user-defined minimum utility threshold $\delta$, MDUS is the task for discovering all mdHUSPs whose overall utilities across multi-dimensional sequences in \textit{QSD} are no less than $u(QSD) \times \delta$. By utilizing the dimensions of MDUS, we can extract general or specific patterns to achieve utility maximization and to overcome the problems of excessive granularities and high utilities.

\section{Proposed MDUS$ _\text{EM} $ Algorithm} \label{sec:algorithm1}

Based on the above concepts, in this section, the first algorithm named MDUS$ _\text{EM} $ ($E$ refers to the \underline{E}quivalence transformation part, $M$ refers to the \underline{M}ining part) is proposed to discover multi-dimensional HUSPs that achieve utility mining across multi-dimensional sequences.

\subsection{Transformation of Database}

To transform database, we should first compare the two database formats. Based on the previous definitions, it can be seen that a multi-dimensional \textit{q}-sequence can be viewed as a \textit{q}-sequence with several dimensions. For example, the first transaction \textit{S}$ _{1} $ is (\textit{Male}, \textit{Young}, \textit{Doctor}, $ < $[(\textit{a}:1) (\textit{c}:3)], [(\textit{a}:5) (\textit{c}:1) (\textit{e}:4)], [(\textit{c}:2)], [(\textit{b}:1)]$ > $), which can be divided into two parts: a \textit{q}-sequence and three dimensions. Thus, for a multi-dimensional \textit{q}-sequence $ t^{q} $ = $ (d_{1}, d_{2}, \dots, d_{m}, s^{q}) $, the transformation considers the dimensions of $ t^{q} $ as an itemset and adds them to $ s^{q} $. Since the dimension information has no utility values, the utility of each dimension in an itemset is 0. For example, transforming the multi-dimensional \textit{q}-sequence  \textit{S}$ _{1} $ from Table \ref{table:db} results in this \textit{q}-sequence:  $<$[(\textit{a}:1) (\textit{c}:3)], [(\textit{a}:5) (\textit{c}:1) (\textit{e}:4)], [(\textit{c}:2)], [(\textit{b}:1)], [(\textit{Male}:0) (\textit{Young}:0) (\textit{Doctor}:0)]$>$. Table \ref{table:transformed} shows the database obtained by transforming the original database of Table \ref{table:db}.

\begin{table}[!htbp]
	\setlength{\abovecaptionskip}{0pt}
	\setlength{\belowcaptionskip}{0pt} 
	\caption{A transformed multi-dimensional \textit{QSD}.}
	\centering
	\begin{tabular}{c|c}
		\hline
		\textbf{SID} &  \textbf{Q-sequence} \\ \hline
		$ S_{1} $ & $<$ [(Male:0)(Young:0)(Doctor:0)], [(\textit{a}:1) (\textit{c}:3)], [(\textit{a}:5) (\textit{c}:1)\\  &   (\textit{e}:4)], [(\textit{c}:2)], [(\textit{b}:1)]$>$  \\ \hdashline
		$ S_{2} $ & $<$ [(Female:0)(Middle:0)(Lawyer:0)], [(\textit{c}:1)], [(\textit{b}:4)], [(\textit{b}:9) \\  &  (\textit{d}:8)], [(\textit{b}:9) (\textit{e}:6)]$>$  \\ \hdashline
		$ S_{3} $ & $<$ [(Male:0)(Child:0)(Driver:0)], [(\textit{a}:10) (\textit{d}:5)]$>$  \\ \hdashline
		$ S_{4} $ & $<$ [(Male:0)(Young:0)(Writer:0)], [(\textit{a}:3) (\textit{b}:4) (\textit{d}:2) (\textit{e}:6)] \\  &  [(\textit{b}:3) (\textit{c}:2)]$>$  \\ \hdashline
		$ S_{5} $ & $<$ [(Female:0)(Old:0)(Artist:0)], [(\textit{e}:4)], [(\textit{d}:7)], [(\textit{c}:5)], \\  &  [(\textit{a}:9) (\textit{b}:3) (\textit{c}:7) (\textit{d}:7)]$>$  \\ \hline
	\end{tabular}
	\label{table:transformed}
\end{table}

After all transactions in a multi-dimensional \textit{QSD} have been transformed into quantitative sequences, a search procedure can be applied on the transformed database. Note that the discovered results are the  HUSPs, and they should be converted back to multi-dimensional HUSPs. In this process, dimensions appearing in HUSPs are then considered as dimension values. Moreover, for dimensions not appearing in a pattern, the value * is assigned as dimension value. For example, if $<$[\textit{a} \textit{c}] [\textit{b}] [\textit{Young} \textit{Doctor}]$>$ is a HUSP found in the transformed database, and the corresponding multi-dimensional HUSP is (*, \textit{Young}, \textit{Doctor}, $<$[\textit{a} \textit{c}] [\textit{b}]$>$). This pattern can be interpreted as young doctors prefer to buy products \textit{a}, \textit{c} and \textit{b} in that order.

\subsection{Proposed MDUS$_\text{EM} $ Algorithm}

Based on the above operations, the MDUS$ _\text{EM} $ algorithm is presented below. It first transforms a multi-dimensional \textit{QSD} into a traditional \textit{QSD}. Then, a search procedure is applied to the transformed database. After that, the MDUS$ _\text{EM} $ approach obtains multi-dimensional HUSPs from the mined results. The process of MDUS$ _\text{EM} $ is a non-trivial task. To ensure the correctness and completeness of the mining results, we have to ensure the equivalence of the original problem and the transformed problem. We address this issue by proposing the following theorem.

\begin{theorem}[\textbf{Equivalence Property}]
	\label{theorem-em-equal}
	Given a multi-dimensional \textit{QSD} and a minimum utility threshold $ \delta $. Let mdHUSPs be the final set of multi-dimensional high-utility sequential patterns, and HUSPs be the set of high-utility sequential patterns in the transformed database. Then, mdHUSPs = HUSPs is obtained.
\end{theorem}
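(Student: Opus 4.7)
The plan is to establish a utility-preserving bijection between multi-dimensional sequences over the original database and ordinary sequences over the transformed database, and then argue that this bijection carries the ``high-utility'' property across, so the two output sets coincide up to this correspondence.

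First I would fix the encoding map $\Phi$ that sends a multi-dimensional sequence $t = (d_1,\dots,d_m,s)$ to the ordinary sequence obtained by appending the itemset $D_t = [d_{i_1}\ d_{i_2}\ \dots]$ of those coordinates with $d_{i_j}\ne *$ to $s$ (and by symmetry, each ordinary sequence found in the transformed database splits uniquely into its ``regular'' items plus the subset of dimension items, because the items drawn from the dimension alphabets are disjoint from the original item alphabet $I$). The first lemma to prove is that $\Phi$ is a bijection between candidate multi-dimensional patterns and candidate patterns over the transformed database, using the assumption that items in the itemsets are listed in alphabetical order so that the dimension-itemset is placed canonically.

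Next I would verify that $\Phi$ preserves containment and matching in the sense required by Definitions~6--8. For any raw q-sequence $s^q$ and any multi-dimensional q-sequence $t^q = (d_1,\dots,d_m,s^q)$, the condition ``$d_i = d_i'$ or $d_i' = *$ for all $i$'' used for multi-dimensional containment translates exactly into ``$D_{t'} \subseteq D_{t^q}$'' as itemsets in the transformed database; combined with $s \subseteq s^q$, this is precisely the ordinary containment $\Phi(t) \subseteq \Phi(t^q)$. Because each dimension item is assigned quantity and external utility zero, the matching condition and the max-over-matches used in $u(t,t^q)$ are unchanged by appending $D_{t^q}$, so $u(t,t^q) = u(\Phi(t),\Phi(t^q))$ for every transaction, and summing gives $u(t) = u(\Phi(t))$. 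The same argument at the level of transactions yields $u(QSD) = u(\Phi(QSD))$, so the absolute threshold $u(QSD)\cdot\delta$ is identical on both sides.

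The step I expect to be the main obstacle is the matching/utility-preservation argument: one must show not only that the set of matches is in bijection, but that the maximum utility is attained on corresponding matches. This requires handling a subtlety, namely that a multi-dimensional sequence with wildcard values could in principle match a transaction in several ways once dimension items are treated as ordinary items subject to the subset relation, and one must check that these ``extra'' interpretations do not create spurious matches or inflate the utility; the fact that the dimension alphabet is disjoint from $I$, that each dimension contributes at most one value per transaction, and that dimension items carry zero utility neutralizes all three concerns. Putting the pieces together, $t$ is an mdHUSP iff $u(t) \geq u(QSD)\cdot\delta$ iff $u(\Phi(t)) \geq u(\Phi(QSD))\cdot\delta$ iff $\Phi(t)$ is a HUSP in the transformed database, so $\Phi$ restricts to a bijection between mdHUSPs and HUSPs, which is what the theorem asserts.
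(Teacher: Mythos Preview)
Your proposal is correct and follows essentially the same strategy as the paper: both arguments hinge on the observation that, because dimension items carry zero quantity/utility, the transformation preserves the utility of every pattern, so the high-utility condition is equivalent on both sides. Your version is considerably more careful than the paper's---you explicitly set up the bijection $\Phi$, verify that multi-dimensional containment translates to ordinary containment via the subset condition on $D_t$, and flag the need for the dimension alphabet to be disjoint from $I$---whereas the paper's proof simply asserts that ``calculating the utility is independent of the dimensions'' and moves directly to $u(t)=u(t')$.
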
 

\begin{proof}
	Let $ t $ be a mdHUSP, and $ t' $ be the corresponding HUSP $ t $. Based on the definition of the utility of a multi-dimensional sequential pattern, we can obtain that $ u(t)$ = $\sum_{t^{q}\in QSD} \{u(t,t^{q})$ $| t \subseteq t^{q}\}$ = $\sum_{t^{q}\in QSD} max\{u(t^{q'})|t \sim t^{q'} \wedge t^{q'}$ $\subseteq t^{q}\}$ = $\sum_{t^{q}\in QSD} max\{u(s^{q'})|t \sim t^{q'}$ $\wedge t^{q'} \subseteq t^{q} \wedge s^{q'} \in t^{q'} \} $, where $ s^{q'} $ is a \textit{q}-sequence in the multi-dimensional \textit{q}-sequence $ t^{q'} $. From these results, it can be seen that the utility of a multi-dimensional sequential pattern is the utility of $ t $ in the \textit{q}-sequences of transactions. In other words, calculating the utility is independent of the dimensions. 
	
	In addition, when transactions are transformed into $q$-sequences, only the dimensions in transactions are changed. This change does not modify the utility of $ t $ since the quantities of these dimensions are 0. Thus, the utility of $ t $ does not change after the database transformation, which means that $ u(t) = u(t') $. As a result, $ u(t') = u(t) \geq u(QSD) \times \delta $. Hence, we can have that $ t' \in $ HUSPs and mdHUSPs $ \subseteq $ HUSPs.
	Let $ t' $ be a high-utility sequence in HUSPs and $ t $ be the original sequence of $ t' $, we can get that $ u(t) = u(t') \geq u(QSD) \times \delta $. Thus, $ t $ $ \in $ mdHUSPs and HUSPs $ \subseteq $ mdHUSPs.
	In summary, we can obtain that mdHUSPs = HUSPs.

\end{proof}

\begin{definition}[\textbf{\textit{I}-\textit{Concatenation}} and \textbf{\textit{S}-\textit{Concatenation}}]
	Given a sequence $ t $ and an item $ i_{j} $, the \textit{I-Concatenation} of $ t $ with $ i_{j} $ consists of appending $ i_{j} $ to the last itemset of $ t $, denoted as $<$$t \oplus i_{j}$$>$$_{I-Concatenation}$. 
	The \textit{S-Concatenation} of $ t $ with an item $ i_{j} $ consists of adding $ i_{j} $ to a new itemset appended after the last itemset of $ t $, denoted as $<$$t \oplus i_{j}$$>$$_{S-Concatenation}$.
\end{definition}

%%%%%%%%%%%%%%%%%%%%%%%
For example, given a sequence $ t $ = $<$[$a$], [$b$]$>$ and a new item $(c)$, $<$$t \oplus c$$>$$_{I-Concatenation}$ = $<$[\textit{a}],[\textit{bc}]$>$ and $<$$t \oplus c$$>$$_{S-Concatenation}$ = $<$[\textit{a}],[\textit{b}],[\textit{c}]$>$. 
Based on the previous definitions, it follows that the number of itemsets in $ t $ does not change after performing an \textit{I}-\textit{Concatenation}, while performing an \textit{S-Concatenation} increases the number of itemsets in $ t $  by one. The search space of the addressed MDUS problem can be presented as an extended Set-enumeration tree, as shown in Fig. \ref{fig:tree}. It shows a built partial lexicographic sequence-tree (LS-tree) \cite{wang2016efficiently,yin2012uspan} based on Table \ref{table:db}. 

\begin{figure}[!htbp]
	\setlength{\abovecaptionskip}{0pt}
	\setlength{\belowcaptionskip}{0pt} 		
	\centering
	\includegraphics[width=3.6in]{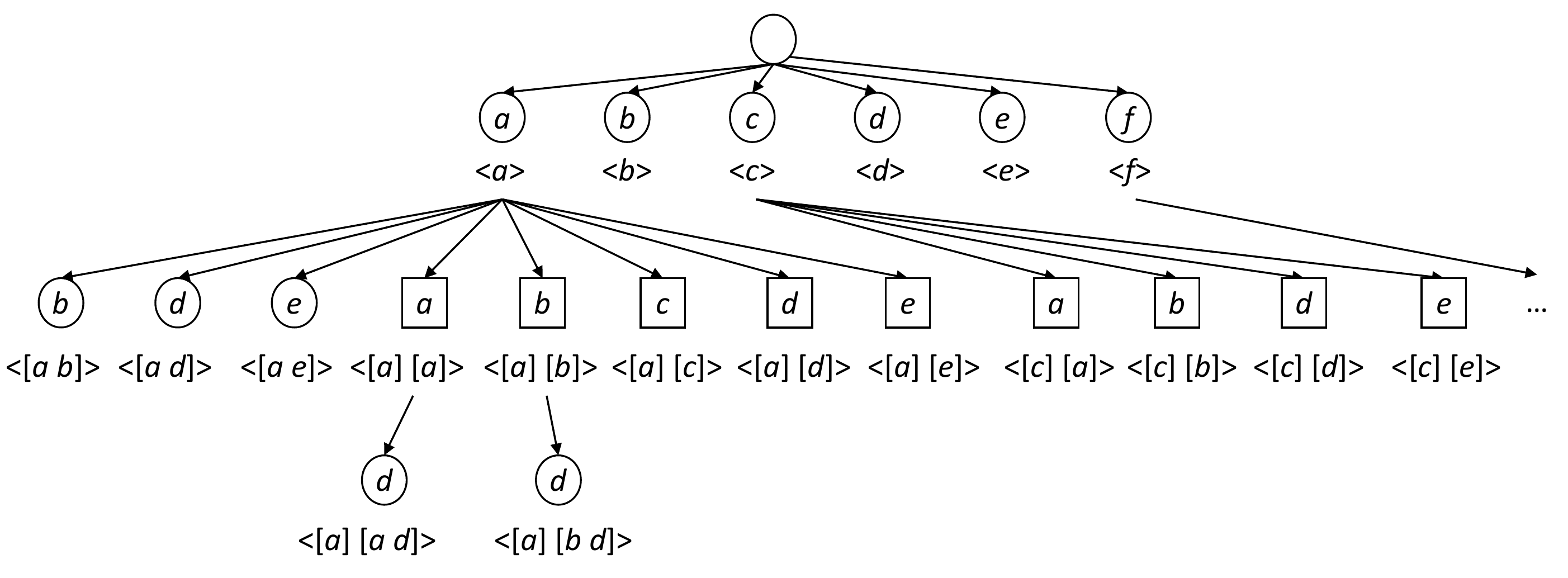}
	\caption{A lexicographic sequence (LS)-tree.}
	\label{fig:tree}
\end{figure}

Based on the two operations, all candidates of the search space can be generated and checked for mining mdHUSPs. To efficiently deal with the high computational complexity of this problem, some upper bounds on utility (e.g., SWU \cite{yin2012uspan}, prefix extension utility (PEU) \cite{wang2016efficiently}, and reduced sequence utility (RSU) \cite{wang2016efficiently}) are adopted in our model. However, it needs to scan the database many times, which would result in long execution times because there are often multiple matches in a sequence. To handle this situation, the compact structure named utility-linked (UL)-list \cite{lin2017high} is adopted here. The UL-list contains two arrays, \textit{Header Table} and \textit{UP (utility and position) Information}. Details are described below. \textbf{1) Header Table}. It stores  a set of distinct items with their first occurrence positions in the transformed transaction. \textbf{2) UP Information}. In terms of utility and position information of each sequence, each element respectively stores the \textit{\underline{item name}}, the \textit{\underline{utility of the item}}, the \textit{\underline{remaining utility of the item}}, and the \textit{\underline{next position of the item}}. For each node in the LS-tree, transactions containing this node (sequence) are transformed into a utility-linked (UL)-list and attached to the related database of this node. The utilities and upper-bounds of the candidates can be easily calculated from the  database using the UL-list structure.

\subsection{Framework Overview of MDUS$ _\text{EM} $}

Based on Theorem \ref{theorem-em-equal}, we can guarantee the  completeness and correctness of the MDUS$ _\text{EM} $ algorithm. The theorem ensures the equivalence between the original problem and the transformed problem. The framework overview of MDUS$ _\text{EM} $ is given in Algorithm \ref{alllgo}. In general, the MDUS$ _\text{EM} $ algorithm consists of three phases: 1) database transformation, 2) extract HUSPs by using the \textbf{search} procedure, and 3) transformation of the found HUSPs. It first scans and transforms each transaction from the original multi-dimensional database into a \textit{q}-sequence. The function \textbf{transform} is a transformation procedure, which identifies all dimensions in transactions and puts them into the \textit{q}-sequence as an itemset. The quantities of these dimensions are 0.  After that, MDUS$ _\text{EM} $ invokes a \textbf{search} procedure (c.f. Algorithm \ref{HUSPM-Miner}) to mine the processed database. Then, it transforms the results into multi-dimensional HUSPs. Here the mining procedure can adopt an existing HUSPM algorithm, but we improve the efficiency by using the UL-list \cite{lin2017high} structure and \textbf{PGrowth} searching procedure. Details of \textbf{transform} and \textbf{PGrowth} are skipped here due to space limitation. Note that the most difficult part is to ensure the equivalence between the original problem and the transformed problem, and the complexity of MDUS$ _\text{EM} $ depends on the employed \textbf{search} procedure or HUSPM algorithm.

\begin{algorithm}
	\caption{MDUS$ _\text{EM} $}
	\label{alllgo}
	\begin{algorithmic}[1]
		\REQUIRE {\textit{QSD}, a multi-dimensional quantitative sequential database; \textit{utable}, a utility table containing the unit profit of each item; $ \delta $, a minimum utility threshold.} 
		\ENSURE {The set of \textit{mdHUSPs}.}
		
		\STATE $ QSD' \gets \varnothing $;
		\FOR {each $ t^{q}\in QSD $}
			\STATE $ s \gets transform(t^q) $;
			\STATE $ QSD' \gets QSD' \cup s $;
		\ENDFOR
		\STATE \textit{HUSPs} $\gets$ HUSPM$(QSD', utable, \delta) $;
		\STATE \textit{mdHUSPs} $ \gets \varnothing $;
		\FOR {each $ s\in $ \textit{HUSPs}}
			\STATE $ t\gets transform'(s) $;
			\STATE \textit{mdHUSPs} $ \gets $ \textit{mdHUSPs} $\cup t $;
		\ENDFOR
		\STATE \textbf{return} \textit{mdHUSPs}
	\end{algorithmic}
\end{algorithm}

%%%%%%%%%% Algorithm: HUSP-Miner %%%%%%%%%%
\begin{algorithm}
	\caption{search}
	\label{HUSPM-Miner}	
	\begin{algorithmic}[1]
		%输入Input
		\REQUIRE {\textit{QSD'}, a transformed \textit{QSD}; \textit{utable}, a utility table; $\delta$, a minimum utility threshold.} 
		%输入Output
		\ENSURE {The set of $ HUSPs $.}

		\STATE scan $ QSD' $ to: 1). calculate $u(s)$ for each $s\in QSD' $ and calculate $ u(QSD') $; 2). build the UL-list of each $s\in QSD'$;
		\STATE $HUSPs \gets \emptyset $;
		\FOR {each $ i_{j}\in QSD' $}
		
			\STATE \textit{PD}($ < $$ i_{j} $$ > $)$\gets$\{the UL-list of $s|$$<$$i_{j}$$>$$\subseteq s\wedge s\in QSD'\}$;
			\STATE	calculate $SWU$($<$$i_{j}$$>$) and $u$($<$$i_{j}$$>$);
			\IF {$SWU$($<$$i_{j}$$>$)$ \geq \delta $ $ \times u(QSD') $}
				\IF {$u$($<$$i_{j}$$>$)$\geq \delta \times u(QSD')$)}
				\STATE $HUSPs$$\gets$$HUSPs$$\cup$$<$$i_{j}$$>$;
			\ENDIF
			
			\ENDIF
		\ENDFOR
		
		\STATE $HUSPs$$\gets$$HUSPs$$\cup$$<$$i_{j}$$>$;
		\STATE \textbf{call PGrowth}($ < $$ i_{j} $$ > $, \textit{PD}($ < $$ i_{j} $$ > $), \textit{HUSPs});
		\STATE \textbf{return} \textit{HUSPs}
	\end{algorithmic}
\end{algorithm}

%%%%%%%%%% Algorithm: search %%%%%%%%%%

\textbf{\underline{Complexity analysis}}. Although the proposed MDUS$ _\text{EM} $ algorithm can effectively discover the complete set of multi-dimensional HUSPs, it is inefficient since all the existing HUSPM algorithms are not specifically designed for the addressed problem. A large number of dimensions can result in transformed transactions that are very long. Processing long transactions results in performing numerous iterations and generating many candidates. More especially, the complexity of the \textbf{search} procedure  for discovering HUSPs mainly lies in searching the LS-tree w.r.t. the maximal number of length-1 sequences.  As a result, the search space of the transformed problem can be very large, and the runtime and memory consumption may dramatically increase. To address this issue, the second algorithm is presented below to efficiently mine multi-dimensional HUSPs.

\section{Proposed MDUS$ _\text{SD} $ Algorithm} \label{sec:algorithm2}

In this section, we present a more efficiently algorithm named MDUS$ _\text{SD} $ ($S$ refers to the \underline{\textit{S}}equential part, $D$ refers to the \underline{\textit{D}}imensional part), and the details are described below.

\subsection{Pattern Join}

The main idea of MDUS$ _\text{SD} $ is to mine the sequential part of the database using HUSPM techniques, and mine the dimensional part of the database using a novel DHUI-Miner algorithm, respectively. Then, multi-dimensional HUSPs are obtained from these intermediate results by performing a \textit{pattern join} operation. 

The MDUS$ _\text{SD} $ algorithm first discovers the set of HUSPs without dimensional information. For each HUSP, it finds the dimensions information from the dimensional part of the database using the DHUI-Miner algorithm (it will be described in next subsection). Then MDUS$ _\text{SD} $ combines the sequential patterns and dimensions to return multi-dimensional HUSPs. Note that MDUS$ _\text{SD} $ handles the sequential part and dimensional part of the database separately, which eliminates the problem of processing very long transactions. 

\begin{theorem}[\textbf{Equivalence Property}]
	\label{theorem-sd-equal}
	Given a multi-dimensional \textit{QSD} and a minimum utility threshold $ \delta $, and let $ t = (d_{1}, d_{2}, \dots, d_{m}, s) $ be a multi-dimensional high-utility sequential pattern of \textit{QSD}, then $ s $ is a high-utility sequential pattern in the sequential part of \textit{QSD}.
\end{theorem}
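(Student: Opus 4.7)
The plan is to establish $u(s) \geq u(t)$ when the utility $u(s)$ is computed over the sequential part of the database, and then use the assumption that $t$ is an mdHUSP to conclude that $s$ meets the high-utility threshold. The key observation to set up is that the dimension fields in any multi-dimensional $q$-sequence carry no utility (they are either absent from a normal HUSPM view, or, if transformed as in MDUS$_\text{EM}$, are assigned quantity $0$), so stripping or ignoring the dimensions does not change the total utility of the database: $u(\textit{QSD})$ coincides with the total utility of its sequential part.

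First I would unfold the definition of $u(t)$. By Definition~9, $u(t) = \sum_{t^q \in \textit{QSD}} \{u(t,t^q)\,|\,t\subseteq t^q\}$, where $t\subseteq t^q$ requires both the dimension-matching condition ($d_i = d'_i$ or $d'_i = *$) and the sequence-containment condition $s\subseteq s^q$. I would then consider the collection $\mathcal{T} = \{t^q \in \textit{QSD} : t \subseteq t^q\}$ of transactions contributing to $u(t)$, and note that for every $t^q \in \mathcal{T}$ we automatically have $s \subseteq s^q$, where $s^q$ is the $q$-sequence component of $t^q$. Moreover, since the utility contribution $u(t,t^q)$ is $\max\{u(t^{q'}) : t\sim t^{q'} \wedge t^{q'}\subseteq t^q\}$ and the utility value $u(t^{q'})$ equals $u(s^{q'})$ (dimensions carry zero utility), each term in $u(t)$ is upper-bounded by the corresponding contribution $u(s,s^q)$ of the pure-sequence pattern $s$ in the sequential part.

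Next I would compare $u(t)$ to $u(s)$ in the sequential part of \textit{QSD}. Let $\mathcal{S} = \{t^q \in \textit{QSD} : s \subseteq s^q\}$ be the set of transactions whose sequential component contains $s$. Since $\mathcal{T} \subseteq \mathcal{S}$ (the constraint of matching the dimensions only shrinks the contributing set), and since $u(t,t^q) = u(s,s^q)$ for every $t^q \in \mathcal{T}$ while $u(s,s^q) \geq 0$ for the remaining transactions in $\mathcal{S}\setminus \mathcal{T}$, summing gives
\begin{equation*}
u(s) \;=\; \sum_{t^q \in \mathcal{S}} u(s,s^q) \;\geq\; \sum_{t^q \in \mathcal{T}} u(s,s^q) \;=\; \sum_{t^q \in \mathcal{T}} u(t,t^q) \;=\; u(t).
\end{equation*}

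Finally, from the hypothesis that $t$ is an mdHUSP we have $u(t) \geq u(\textit{QSD}) \times \delta$, and since the overall utility of the sequential part equals $u(\textit{QSD})$ (dimensions add nothing), the chain $u(s) \geq u(t) \geq u(\textit{QSD})\times \delta$ shows that $s$ is an HUSP of the sequential part. I expect the main obstacle to be stated carefully rather than technically hard: one must verify the inclusion $\mathcal{T}\subseteq\mathcal{S}$ and the equality $u(t,t^q) = u(s,s^q)$ for $t^q\in\mathcal{T}$ by chasing the two-level definitions ($\sim$ and $\subseteq$ at the $q$-sequence vs. multi-dimensional $q$-sequence levels), making sure that the $*$-wildcard convention in Definition~7 does not introduce extra matches that could be mistakenly credited with nonzero dimension utility.
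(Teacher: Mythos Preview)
Your proposal is correct and follows essentially the same route as the paper: establish $u(s)\geq u(t)$ and then chain with the mdHUSP hypothesis. The only presentational difference is that the paper packages your set-inclusion argument via the intermediate all-wildcard pattern $ts=(\ast,\ast,\dots,\ast,s)$, invokes Theorem~\ref{theorem-em-equal} to get $u(s)=u(ts)$, and then uses the containment $t\subseteq ts$ to obtain $u(ts)\geq u(t)$; your direct unfolding of $\mathcal{T}\subseteq\mathcal{S}$ with equal per-transaction contributions is exactly what underlies that step.
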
 

\begin{proof}
	From Theorem \ref{theorem-em-equal}, we can get that $ s $ is equal to $ ts = (*, *, \dots, s) $, and their utilities are the same. Based on the definition of multi-dimensional sequential pattern, $ ts $ contains $ t $. Thus, $ u(s) = u(ts) \geq u(t) \geq u(QSD) \times \delta $. Hence, if $ t $ is a multi-dimensional HUSP of \textit{QSD}, $ s $ is a HUSP in the sequential part of the database \textit{QSD}.
\end{proof}

\begin{corollary}		
 \label{corollary-sd-equal}
	\rm Given a multi-dimensional \textit{QSD} and the minimum utility threshold $ \delta $, if a sequence is not a HUSP in the sequential part of \textit{QSD}, then those multi-dimensional patterns containing this sequence are not the desired multi-dimensional HUSPs.
\end{corollary}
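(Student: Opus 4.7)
The plan is to derive Corollary~\ref{corollary-sd-equal} as an immediate contrapositive of Theorem~\ref{theorem-sd-equal}, so the proof should be very short and structural rather than computational. Theorem~\ref{theorem-sd-equal} already gives the forward implication: whenever $t = (d_{1}, d_{2}, \dots, d_{m}, s)$ is a multi-dimensional HUSP in \textit{QSD}, the underlying sequence $s$ is guaranteed to be a HUSP in the sequential part of \textit{QSD}. The corollary is logically equivalent to the contrapositive of this statement, so the argument is essentially a one-line reduction.

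Concretely, I would proceed by contradiction. First I would suppose that $s$ is \emph{not} a HUSP in the sequential part of \textit{QSD}, i.e.\ $u(s) < u(QSD)\times\delta$ when $s$ is interpreted inside the sequential projection of the database. Next I would let $t = (d_{1},d_{2},\dots,d_{m},s)$ be any multi-dimensional pattern whose sequence component is $s$, and assume, for contradiction, that $t$ is a mdHUSP, i.e.\ $u(t)\ge u(QSD)\times\delta$. Applying Theorem~\ref{theorem-sd-equal} to $t$ forces $s$ to be a HUSP in the sequential part of \textit{QSD}, which directly contradicts the assumption on $s$. Hence no such $t$ can be a mdHUSP, which is exactly the statement of the corollary.

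The main (and essentially only) subtlety is being precise about what ``containing this sequence'' means: it should be read as ``having $s$ as the sequential component,'' so that Theorem~\ref{theorem-sd-equal} applies directly without any additional monotonicity argument on sub/supersequences. Once that reading is fixed, no further technical obstacle arises — the corollary is a clean logical consequence of the preceding theorem and requires no new utility bound or database-transformation argument.
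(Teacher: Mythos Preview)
Your proposal is correct and matches the paper's treatment: the paper does not spell out a separate proof for this corollary, presenting it as an immediate consequence of Theorem~\ref{theorem-sd-equal}, which is exactly the contrapositive argument you give. Your clarification that ``containing this sequence'' should be read as ``having $s$ as its sequential component'' is the right reading and is consistent with how the paper uses the corollary.
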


Theorem \ref{theorem-sd-equal} and Corollary \ref{corollary-sd-equal} ensure the completeness of the MDUS$ _\text{SD} $ algorithm. If a sequence $ s $ is not a HUSP in the sequential part of \textit{QSD}, the MDUS$ _\text{SD} $ algorithm will not search the dimensional part of the database of $ s $. Using Corollary \ref{corollary-sd-equal} can reduce the number of candidates and improve the efficiency of the algorithm. To reduce the complexity, we propose the \textit{utility-list} structure and downward closure property for the DHUI-Miner algorithm, which can mine the dimensional part of the database. Details are described as below.

\subsection{Proposed DHUI-Miner Algorithm}

To mine the dimensional part of the database, a novel DHUI-Miner algorithm is proposed. In the first phase, the MDUS$ _\text{SD} $ algorithm  discovers HUSPs in the sequential part of the database using the \textbf{search} procedure (c.f.  Algorithm	\ref{HUSPM-Miner}). Different from HUSPM, MDUS$ _\text{SD} $ also records the utilities of each high-utility sequential pattern (HUSP) in transactions. This utility information and the original database are given as the input to the DHUI-Miner algorithm.

Consider the running example, MDUS$ _\text{SD} $ first discovers high-utility sequential patterns. In the found results, $u(<$$a$$>)$ = 108. If $<$$a$$>$ is a HUSP, MDUS$ _\text{SD} $ will call the DHUI-Miner approach to find dimensions for $<$$a$$>$. Table \ref{table:dimension} is the input database of DHUI-Miner. This database is obtained by transforming the original database, and it consists of the identifiers of transactions (\textit{SID}), dimensions and the utilities of patterns of each transaction (\textit{TU}).

\begin{table}[!htbp]
	\setlength{\abovecaptionskip}{0pt}
	\setlength{\belowcaptionskip}{0pt} 
	\caption{The dimensional database of $<$$a$$>$.}
	\centering
	\begin{tabular}{c|c|c}
		\hline
		\textbf{SID} & \textbf{Transaction} 	 & \textbf{\textit{TU}} \\ \hline
		$ S_{1} $ & (Male     Young    Doctor)  & \$20  \\ \hline
		$ S_{2} $ & (Female   Middle   Lawyer)  & \$0   \\ \hline
		$ S_{3} $ & (Male     Child    Driver)  & \$40  \\ \hline
		$ S_{4} $ & (Male     Young    Writer)  & \$12  \\ \hline
		$ S_{5} $ & (Female   Old      Artist)  & \$36  \\ \hline
	\end{tabular}
	\label{table:dimension}
\end{table}

Notice that the input database of the DHUI-Miner does not have quantities. Besides, the calculation of the overall utility of an itemset in the processed database for the two problems, the existing utility mining and the addressed MDUS problem, is different. Note that the following definition is only appropriate for MDUS problem. 

\begin{definition}
	Given a dimensional database of a sequence $ t $, let $ u(S_{i}) $ denote the utility of $ t $ in transaction $ S_{i} $. Then the overall utility of an itemset $ X $ in \textit{QSD} is defined as: $u(X) = \sum_{X\subseteq S_{i} \wedge S_{i}\in QSD} u(S_{i})$.
\end{definition}

Based on the above definition, DHUI-Miner employs a depth-first search (DFS) approach to identify high-utility itemsets and generate new itemsets by pattern combinations. When generating new itemsets, DHUI-Miner calculates the utilities of itemsets to check whether they are high-utility. Then it generates new itemsets based on the results previously found. To avoid the problem of multiple database scans, we propose a vertical data structure named \textit{utility-list} in DHUI-Miner. Note that the \textit{utility-list} is different from the original concept in \cite{liu2012mining}.

\begin{definition}[\textbf{utility-list}]
  A \textit{utility-list} of an itemset consists of a head and some records. The head part contains the \textit{\underline{name}} and the total utility of the itemset in \textit{QSD} (denoted as \textit{\underline{sutil}}). The records consist of two parts: (1) \textit{\underline{SID}}: the identifier of a transaction containing the special itemset; (2) \textit{\underline{util}}: the utility of itemsets in each transaction, which is the same as the utility of the sequence in the transaction. 
\end{definition}

\begin{figure}[!t]
	\centering
	\includegraphics[width=2.50in]{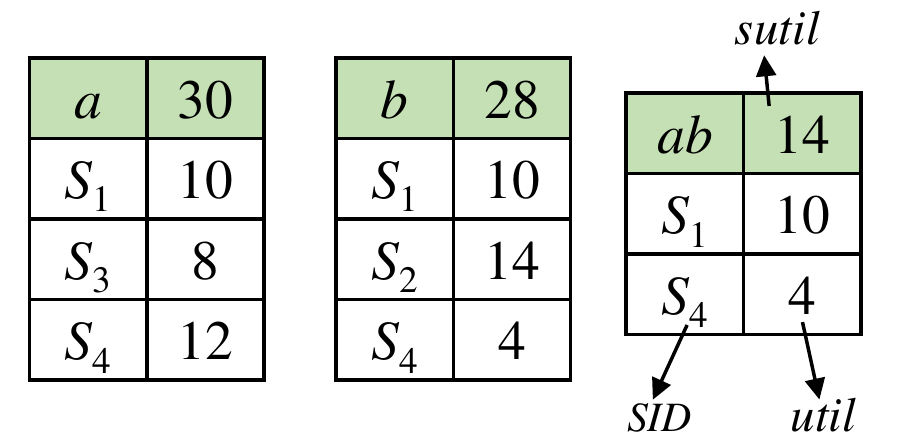}
	\caption{Utility-lists of itemsets (\textit{a}), (\textit{b}) and (\textit{ab}).}
	\label{fig-utility-list} 
\end{figure}

Fig. \ref{fig-utility-list} depicts the utility-lists of 1-itemsets (\textit{a}) and (\textit{b}), and the utility-list of 2-itemset (\textit{ab}). To generate the utility-lists of 1-itemsets, the algorithm scans the database only once to find the transactions containing each 1-itemset, and calculates the utilities of that itemset in each transaction. The utility-lists of \textit{k}-itemsets ($ k \geq 2 $) are generated by intersecting two utility-lists. The intersection of two utility-lists is the intersection of two sets. The DHUI-Miner algorithm first combines the items in the two itemsets to generate a new itemset. Then, it traverses the records in the utility-lists and adds the transaction whose identifier appears in the two utility-lists. The utility of the new itemset in a transaction is the minimum of the utilities of the two itemsets in that transaction.

Based on the construction and intersection of utility-lists, DHUI-Miner can search for all itemsets by calculating their utilities, and then discover the high-utility itemsets. As the length of itemsets increases, the number of records in utility-lists will decrease. However, exploring the complete search space has a high complexity. Therefore, we further propose a new \textit{downward closure} property, which helps to reduce the number of considered itemsets.

\begin{theorem}
	Given a multi-dimensional database \textit{QSD}, let $ Y^{k} $ be a length-$k$ pattern/itemset and $ Y^{k+1} $ be a superset of $ Y^{k} $. Then, $ u(Y^{k}) \geq u(Y^{k+1}) $.
\end{theorem}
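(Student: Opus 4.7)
The plan is to exploit the defining formula
\[
u(X) \;=\; \sum_{X\subseteq S_{i}\,\wedge\,S_{i}\in QSD} u(S_{i}),
\]
which indexes the sum only over those transactions that actually contain $X$. The whole statement is therefore a monotonicity argument about the index set of a sum of non-negative terms, not a statement about the sequential utility structure.

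First, I would write out both quantities explicitly as
\[
u(Y^{k}) \;=\; \sum_{S_{i}\in QSD,\; Y^{k}\subseteq S_{i}} u(S_{i}),
\qquad
u(Y^{k+1}) \;=\; \sum_{S_{i}\in QSD,\; Y^{k+1}\subseteq S_{i}} u(S_{i}).
\]
Next, using $Y^{k}\subseteq Y^{k+1}$ together with transitivity of itemset containment (Definition~4 in the excerpt), I would observe that any transaction $S_{i}$ satisfying $Y^{k+1}\subseteq S_{i}$ automatically satisfies $Y^{k}\subseteq S_{i}$. Hence
\[
\{\,S_{i}\in QSD : Y^{k+1}\subseteq S_{i}\,\}
\;\subseteq\;
\{\,S_{i}\in QSD : Y^{k}\subseteq S_{i}\,\}.
\]

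With this set inclusion in hand, the result follows from the fact that each summand $u(S_{i})$ is non-negative: the utility of the underlying sequence $t$ in a transaction is a sum of products of non-negative quantities and non-negative unit profits. I would then split the sum for $u(Y^{k})$ into the part ranging over the smaller index set (which exactly reproduces $u(Y^{k+1})$) plus a non-negative residual ranging over the transactions that contain $Y^{k}$ but not $Y^{k+1}$. Dropping the residual yields $u(Y^{k})\geq u(Y^{k+1})$.

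The only conceptually delicate point, and hence the step I would be most careful with, is ensuring that each term $u(S_{i})$ is genuinely non-negative in the MDUS setting; without this, the set-inclusion argument would not directly translate into an inequality on the sums. Since the MDUS model inherits the standard utility-mining assumption that quantities and unit profits are non-negative (and the auxiliary dimensions were assigned utility $0$ in the transformation, which is still non-negative), this is immediate, and I would state it as a one-line lemma or remark before invoking it to seal the proof.
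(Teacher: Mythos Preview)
Your proposal is correct and follows essentially the same approach as the paper: the paper's proof simply notes that transactions containing $Y^{k+1}$ also contain $Y^{k}$ and immediately writes the inequality $\sum_{Y^{k}\subseteq S_{i}} u(S_{i}) \geq \sum_{Y^{k+1}\subseteq S_{i}} u(S_{i})$. Your version is more explicit about the set-inclusion of index sets and the non-negativity of the summands, which the paper leaves implicit, but the underlying argument is identical.
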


\begin{proof}
	Since $ Y^{k+1} $ is a superset of $ Y^{k} $,  transactions containing $ Y^{k+1} $  also contain $ Y^{k} $. 
	We can get that $ u(Y^{k})$  = $\sum_{Y^{k}\subseteq S_{i} \wedge S_{i}\in QSD} u(S_{i})$ $\geq \sum_{Y^{k+1}\subseteq S_{i} \wedge S_{i}\in QSD} u(S_{i})$ = $u(Y^{k+1})$.
\end{proof}

\begin{corollary}
	\label{corollary-sd-dcp}
	\rm If an itemset $ Y^{k} $ is not a high-utility itemset, which means that $ u(Y^{k}) < u(QSD) \times \delta $, then any superset $ Y^{k+1} $ of $ Y^{k} $ is not the desired high-utility itemset since  $ u(Y^{k+1}) \leq u(Y^{k})$.
\end{corollary}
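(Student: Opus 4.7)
The plan is to obtain this corollary as an immediate consequence of the preceding theorem. The argument is essentially a short chain of inequalities combined with the definition of a high-utility itemset, so no new machinery or structural property needs to be introduced.

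First I would invoke the theorem just established, which asserts $u(Y^{k+1}) \leq u(Y^{k})$ whenever $Y^{k+1}$ is a superset of $Y^{k}$. Combining this monotonicity with the standing hypothesis $u(Y^{k}) < u(QSD) \times \delta$ yields the chain
\[
u(Y^{k+1}) \;\leq\; u(Y^{k}) \;<\; u(QSD) \times \delta.
\]
By the definition of a high-utility itemset, whose utility must be at least $u(QSD) \times \delta$, this chain directly certifies that $Y^{k+1}$ fails the threshold, which is what we need to conclude.

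Because the statement is a pure transitivity argument piggy-backing on the prior theorem, there is no genuine obstacle. The only mild subtlety worth flagging is that ``superset'' should be interpreted in the general sense rather than only as a one-item extension; since the preceding theorem is already phrased for arbitrary supersets, no further step is required. If instead one were given only the single-item extension version of the monotonicity, I would close the gap by a trivial induction on $|Y^{k+1} \setminus Y^{k}|$, using the theorem as the base case and chaining the inequality through a sequence of intermediate supersets. Either way, the proof remains at most a few lines.
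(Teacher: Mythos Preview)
Your proposal is correct and matches the paper's treatment: the paper states this corollary without a separate proof, treating it as an immediate consequence of the preceding theorem, and your transitivity chain $u(Y^{k+1}) \leq u(Y^{k}) < u(QSD)\times\delta$ is exactly the intended reasoning.
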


Hence, Corollary \ref{corollary-sd-dcp} can be used to reduce the number of candidates and the search space. The MDUS$ _\text{SD} $ algorithm consists of two phases. In the first phase, it employs the \textbf{search} procedure (c.f. Algorithm \ref{HUSPM-Miner}) to discover high-utility sequential patterns from the sequential part of the database. In the second phase, it applies the designed DHUI-Miner algorithm for each HUSP $ t $ to find the corresponding dimensions. The pseudo-code of DHUI-Miner is given in Algorithm \ref{DHUI-Miner}. For each 1-itemset in the dimensional database of $ t $, DHUI-Miner first scans the database to find transactions containing the itemset, and calculates the utility of the itemset in each transaction. Having this information, DHUI-Miner builds the utility-list of each 1-itemset and calculates its utility. Based on the \textit{downward closure} property, only the itemsets whose utilities are no less than $ u$(\textit{QSD}) $ \times $ $\delta $ need to be explored. After that, DHUI-Miner invokes a \textbf{DMiner} procedure to recursively search for supersets of the candidates. Finally, the MDUS$ _\text{SD} $ algorithm combines the sequential pattern $ t $ with high-utility itemsets to obtain multi-dimensional HUSPs.

%%%%%%%%%%%%%%%%%   Algorithm 1   %%%%%%%%%%%%%%%%%%%

\begin{algorithm}
	\caption{DHUI-Miner}
	\label{DHUI-Miner}
	\begin{algorithmic}[1]
		\REQUIRE {$ t $, high-utility sequential pattern; \textit{QSD}, a dimensional database of $ t $; $ \delta $, a minimum utility threshold.} 
		\ENSURE {The set of \textit{mdHUSPs}.}
		\STATE $ \textit{HUIs} \gets \varnothing $;
		\FOR {each $ i_{j}\in QSD $}
		\STATE $ i_{j}.UL \gets$ the utility-list of $i_{j} $;
		\IF {$ i_{j}.UL.sutil \geq u(QSD) \times \delta $}
		\STATE $ \textit{HUIs} \gets \textit{HUIs} \cup i_{j} $;
		\STATE $ exULs \gets exULs \cup i_{j}.UL$;
		\ENDIF
		\ENDFOR
		\STATE call \textbf{DMiner($ exULs, \delta, \textit{HUIs} $)};
		\STATE \textit{mdHUSPs} $ \gets \varnothing $;
		\FOR {each $ X \in HUIs $}
		\STATE \textit{mdHUSPs} $ \gets $ \textit{mdHUSPs} $\cup$ \textbf{combine($ X, t $)};
		\ENDFOR
		\STATE \textbf{return} \textit{mdHUSPs}
	\end{algorithmic}
\end{algorithm}

%%%%%%%%%%%%%%%%%   Algorithm 2   %%%%%%%%%%%%%%%%%%%

As shown in Algorithm \ref{DMiner}, the \textbf{DMiner} procedure first traverses the utility-lists in \textit{ULs} and combines the current utility-list with the next one to obtain a new itemset and its utility-list. If the utility of the new itemset is no less than $ u(QSD) \times \delta $, this itemset is added to the set of high-utility itemsets (\textit{HUIs}). Based on the downward closure property, its utility-list is added to the candidate set \textit{ULs} as well. Then, \textbf{DMiner}  with the candidate set as input is recursively performed. After all of the utility-lists in \textit{ULs} have been processed, \textbf{DMiner} returns the set of \textit{HUIs}. Note that the \textbf{Construct} procedure performs the intersection of two utility-lists, including combining two itemsets to obtain a new itemset and intersecting two utility-lists to obtain a new utility-list.

%%%%%%%%%%%%%%%%%   Algorithm 2   %%%%%%%%%%%%%%%%%%%
\begin{algorithm}
	\caption{DMiner}
	\label{DMiner}
	\begin{algorithmic}[1]
		\REQUIRE {$ ULs $, the set of utility-lists; $ \delta $, a minimum utility threshold.} 
		\ENSURE {The set of \textit{HUIs}.}
		\FOR {each $ X\in ULs $}
		\STATE $ exULs \gets \varnothing $;
		\FOR {each $ Y\in ULs$ after $ X $}
		\STATE $ Z.UL \gets $ Construct($ X, Y $);
		\IF {$ Z.UL.sutil \geq u(QSD) \times \delta $}
		\STATE $ \textit{HUIs} \gets \textit{HUIs} \cup Z $;
		\STATE $ exULs \gets exULs \cup Z.UL $;
		\ENDIF
		\ENDFOR
		\STATE call \textbf{DMiner($ exULs, \delta, \textit{HUIs} $)}.
		\ENDFOR
	\end{algorithmic}
\end{algorithm}

%%%%%%%%%% Algorithm: HUSP-Miner-Judgment %%%%%%%%%%
%%%%%%%%%% Algorithm: HUSP-Miner-Judgment %%%%%%%%%%
%%%%%%%%%% Algorithm: HUSP-Miner-Judgment %%%%%%%%%%
\begin{algorithm}
	\caption{\textbf{Judge}(\textit{prefix'}, \textit{PD}(\textit{prefix}), $ HUSPs $)}
	\label{judge}
	\begin{algorithmic}[1]
		\STATE \textit{PD}(\textit{prefix'})$\gets$\{the UL-list of $s|$\textit{prefix'}$\subseteq s\wedge s\in$ \textit{PD}(\textit{prefix});
		\STATE calculate $ u $(\textit{prefix'}) and $ PEU $(\textit{prefix'});
		\IF {$ PEU $(\textit{prefix'})$\geq \delta \times u(QSD)$}
		\IF {$ u $(\textit{prefix'})$\geq \delta \times u(QSD) $}
		\STATE $ HUSPs $$\gets$$ HUSPs $$\cup$\textit{prefix'};
		\ENDIF
		\STATE call \textbf{PGrowth(\textit{prefix'}, \textit{PD}(\textit{prefix'}), \textit{HUSPs})}.
		\ENDIF
	\end{algorithmic}
\end{algorithm}
%%%%%%%%%%%%%%%%%   Algorithm 2   %%%%%%%%%%%%%%%%%%%

\textbf{\underline{Complexity Analysis}}. The complexity analysis of the MDUS$ _\text{SD} $ algorithm includes two parts. The time complexity of the mining of sequence part is the same as the adopted HUSPM algorithm. The complexity analysis of dimension part is as follows. Suppose the database \textit{QSD} has $n$ transactions, and the number of dimensions is $m$. Let the number of different items (dimension values) and itemsets with length $k$ is $|I|$ and  $|L_{k}|$, respectively. The DHUI-Miner algorithm first scans the database to find the candidate 1-itemsets, which takes a time of $O(n \times m \times log(|I|))$. Generating a set of ($k$+1)-itemsets from a set of $k$-itemsets results in $|L_{k}| \times |L_{k}|$ candidates. The worst case for computing the utility value of each itemset needs $O(n)$, and this is related to the number of transactions that contain these itemsets. Therefore, the worst-case scenario requires the time complexity as $O(n \times m \times log(|I|) + \sum\nolimits_{k=1}^{m-1} |L_{k}| \times |L_{k}| \times n)$. As the length of the itemset grows, the number of the transactions that include this itemset decreases. Thus, the time to compute its utility value decreases. In addition, due to the effect of pruning strategy, the running time will be further reduced.

\section{Experiments}\label{sec:experiments} %  and Preprocessing
In this section, experiments are described to evaluate the effectiveness and efficiency of the proposed MDUS framework and the MDUS$ _\text{EM} $ and MDUS$ _\text{SD} $ algorithms.  Notice that some related algorithms (e.g., association rule mining \cite{agrawal1994fast}, itemset mining \cite{han2004mining}, sequential pattern mining algorithms \cite{pei2004mining,zaki2001spade}, utility-driven mining algorithms \cite{yin2012uspan}, etc.) are released on our Website - SPMF\footnote{\url{http://www.philippe-fournier-viger.com/spmf}} which is a widely used open-source library. However, all of these existing algorithms are not suitable compared in this paper. The reason is that this is the first study for the problem of utility mining across multi-dimensional sequences, and this addressed task is different from previous studies. Different mining task leads to different results, as well as different metrics. Therefore, none existing approach can be regarded as the baseline for comparison reasonably. All the existing studies cannot solve the MUDS problem by utilizing the time-dependent sequence-order, quantitative information, utility factor and auxiliary dimension. 

\subsection{Experimental Settings and Datasets}

% \textbf{\underline{Datasets}}. 
Experiments were conducted on six real datasets (yoochoose, UK-online, Bible, BMS, MSNBC and Kosarak10k) and one synthetic large-scale dataset (C8S6T4I3D$ | $X$ | $K). The basic statistics of these datasets are summarized below.

$\bullet$ \textbf{yoochoose}\footnote{\url{https://recsys.acm.org/recsys15/challenge/}}: it comprises the buy events of the users over the items. The used yoochoose-buys dataset is trained on 6 months of data, containing 7,966,257 sessions of 31,637,239 clicks on 37,483 items.  It contains the sequence of clicking events when the user is surfing on the e-commerce site. Each clicking event is a time-stamp of a click on an item/product, and each record/line has the following fields/dimensions: Session ID, Timestamp, Item ID, Price, Quantity.

$\bullet$ \textbf{UK-online}\footnote{\url{http://archive.ics.uci.edu/ml/datasets/Online+Retail/}}: it contains 541,909 transactions, which occurs between 01/12/2010 and 09/12/2011 for a UK-based and registered non-store online retail. The company mainly sells unique all-occasion gifts. Thus, the original data has many noise values; it requires the pre-processing step to refine the dataset. It has 8 attribute/dimensional information, including InvoiceNo, StockCode, Description, Quantity, InvoiceDate, UnitPrice, CustomerID and Country.

$\bullet$ \textbf{Bible}, \textbf{BMS} and \textbf{MSNBC}\footnote{\url{http://www.philippe-fournier-viger.com/spmf}}: \textbf{Bible} is a real-life dataset obtained by converting the Bible, and it has 36,369 transactions with 13,905 items. Besides, the maximal number of items in sequence is 100. \textbf{BMS}  is a real-life dataset of click-stream data which collected from an e-commerce site. It has 59,601 transactions with 497 items, and the maximal number of items in sequence is 267. \textbf{MSNBC} has click-stream data obtained from the UCI repository, where the shortest sequences are removed. Total 31,790 transactions and 17 items contained in this dataset.

$\bullet$ \textbf{Kosarak10k}: this is a real-life dataset of click-stream data from a Hungarian news portal, which is a subset of kosarak\footnote{\url{http://fimi.ua.ac.be/data/}}. It has 10,000 transactions with 10,094 items, and the average number of items in each sequence is 8.1.

$\bullet$ \textbf{C8S6T4I3D$ | $X$ | $K}: it is generated by the public IBM Quest Dataset Generator \cite{Agrawal1994-dataset}. Notice that total 100,000 to 500,000 transactions are contained in this synthetic dataset. As a large-scale dataset, it is used to evaluate the scalability in this paper.

For the synthetic dataset, a simulation model \cite{tseng2013efficient,yin2012uspan,liu2012mining} was adopted to generate quantities and profit values of items. The quantity of each item was randomly generated in the range of 1 to 5. A log-normal distribution was used to randomly assign the profit values of items in the range of 0.01 to 1000.00. And the dimensions of synthetic datasets are generated using a random model. 

\textbf{\underline{Implementation Details}}: To test the proposed algorithm, a series of performance studies were conducted.  All the algorithms were implemented in Java using JDK 1.7. The experiments were executed on a personal computer equipped with an Intel Core2 i7-4790 CPU and 8 GB of RAM, running the 64-bit Microsoft Windows 7 operating system. Note that all the source code will be released on the SPMF website after the acceptance of this paper.

\subsection{Effectiveness Analysis of Patterns}
Firstly, the number of candidates and actual discovered multi-dimensional HUSPs are evaluated. The goal is to analyze the effectiveness of the proposed algorithms and the effect of different mining strategies. Note that \#${P}$1 and \#${P}$2 respectively denote the number of candidates generated by the MDUS$ _\text{EM} $ and MDUS$ _\text{SD} $ algorithms, and \#\textit{mdHUSPs} denote the number of the final discovered multi-dimensional HUSPs. The pattern results under various minimum utility thresholds are shown in Table \ref{table:candidate1}. Details of the varied $\delta$ are shown as the x-axis in Fig. \ref{fig-runtime1}.

From Table \ref{table:candidate1}, it can be seen that \#${P}$2 is either a lot better or close to \#${P}$1, while \#\textit{mdHUSPs} is always the smaller. For example on the MSNBC dataset, $\delta$ is varied from 0.01 to 0.05 at Table \ref{table:candidate1}(e). When $\delta$ is set at 0.03, the results are \#${P}$1: 506,942, \#${P}$2: 48,341, and \#\textit{mdHUSPs}: 23,224. Thus, \#${P}$2 is closed to the final results of mdHUSPs. \#${P}$2 is not  always smaller than \#${P}$1, for example on BMS dataset under all cases. These results show that a huge number of unpromising candidates were generated and determined for mining the desired multi-dimensional HUSPs. More specifically, the results of the number of generated candidates imply that the MDUS$ _\text{SD} $ algorithm has a powerful ability to filter lots of unpromising patterns. For example, when $\delta$: 0.01 on MSNBC dataset, the set that (\#${P}$1 - \#${P}$2) = 62,494,856 - 4,777,996 = 57,716,860 is successfully filtered and that \#${P}$2: 4,777,996 is highly close to \#\textit{mdHUSPs}: 4,255,916.

Experiments conducted on six real-life datasets show that the newly formulated MDUS model works in case of interactive utility mining on multi-dimensional sequential database. Besides, the results of generated patterns indicate that the parameter $\delta$ is sensitive to the MDUS framework. When one or more dimensions of information is mined while randomly setting the number of dimensions, MDUS is specific to a particular domain or application type, and can give us the informative and useful patterns.

%%%%%%%%%%%%%%%%%%%%%%%%%%%%%%%%%%%%%%%%%%%%%%%%%%Friedman
\begin{table*}[!htbp]
	%\scriptsize
	\fontsize{8.50pt}{9pt}\selectfont
	\centering
	\caption{Number of patterns by varying $\delta$.}
	\label{table:candidate1}
	\begin{tabular}{c|c|ccccc}
		\hline\hline
		\multirow{2}*{\textbf{Dataset}}&
		\multirow{2}*{\textbf{Pattern}}
		&\multicolumn{5}{c}{\textbf{Minimum utility threshold}}\\
		\cline{3-7}%\morecmidrules
		&& $\delta_1$ &  $\delta_2$  &  $\delta_3$  &   $\delta_4$   &  $\delta_5$   \\ \hline

		&  \textbf{\#${P}$1} & 3,349,156  &  2,336,030  &  1,612,493  &  1,164,948  &  872,762  \\
($a$)  yoochoose &  \textbf{\#${P}$2} & 3,041,690  &   2,121,456  &   1,142,036  &   894,123  &   621,567  \\
&\textbf{\# \textit{mdHUSPs}} & 498,315  &  245,891  &  91,288  &  39,398  &   31,549	\\
\hline

		&  \textbf{\#${P}$1} & 647,023  &  501,897  &  420,238  &  310,239  &  260,475  \\
($b$)  UK-online &  \textbf{\#${P}$2} & 622,450  &   487,918  &   402,388  &   301,003  &   243,561  \\
&\textbf{\# \textit{mdHUSPs}} & 81,764  &  75,980  &  51,078  &  31,653  &  19,666 	\\
\hline

		&  \textbf{\#${P}$1} & 19,458,716 &  5,823,863 &  1,971,661 &  1,083,898 &  595,556  \\
         ($c$)  Bible &  \textbf{\#${P}$2} & 1,484,097 &  481,196 &  190,600 &  113,786 &  72,898  \\
        &\textbf{\# \textit{mdHUSPs}} & 1,183,823 &  326,779 &  96,588 &  53,225 &  28,595 	\\
  \hline

		&  \textbf{\#${P}$1} & 116,803,878  &   7,915,800 &   4,069,994 &  2,407,121 &  1,377,358  \\
($d$)  BMS &  \textbf{\#${P}$2} & 116,804,859  &   7,916,751 &   4,070,918 &   2,407,995 &   1,378,217  \\
&\textbf{\# \textit{mdHUSPs}} & 203,966 &  14,666 &  753 &  39 &  6 	\\
\hline

		&  \textbf{\#${P}$1} & 62,494,856 &  4,199,079 &  506,942 &  218,485 &  37,673 \\
($e$)  MSNBC &  \textbf{\#${P}$2} & 4,777,996 &  302,586 &  48,341 &  22,161 &  8,866  \\
&\textbf{\# \textit{mdHUSPs}} & 4,255,916 &  234,921 &  23,224 &  9,848 &  1,389	\\
\hline

&  \textbf{\#${P}$1} & 154,247,023  &  68,100,609  &  31,304,545  &  15,129,904  &  8,022,708  \\
($f$)  Kosarak10k &  \textbf{\#${P}$2} & 154,728,144  &   67,619,206  &   30,296,237  &   13,851,331  &   6,616,889  \\
&\textbf{\# \textit{mdHUSPs}} & 317,648  &  195,980  &  159,078  &  145,653  &  139,647 	\\
\hline

		\hline\hline
		%\bottomrule
	\end{tabular}
\end{table*}
%%%%%%%%%%%%%%%%%%%%%%%%%%%%%%%%%%%%%%%%%%%%%%%%%%%%%%

\begin{figure*}[!t]
	\centering
	\includegraphics[trim=5 65 10 0,clip,scale=0.68]{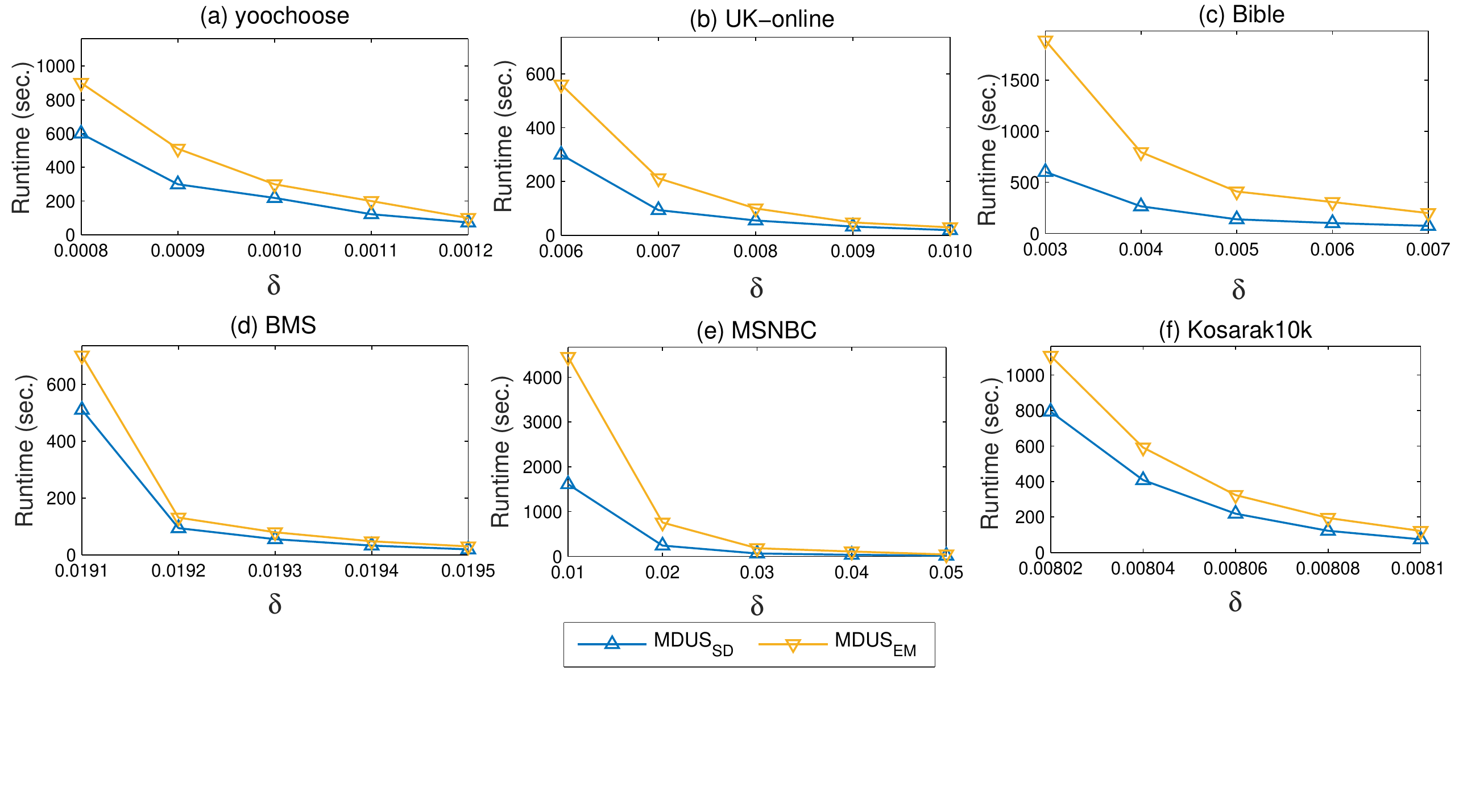}
	\caption{Runtime by varying $\delta$.}
	\label{fig-runtime1}
\end{figure*}

\begin{figure*}[!t]
	\centering
	\includegraphics[trim=5 65 20 0,clip,scale=0.68]{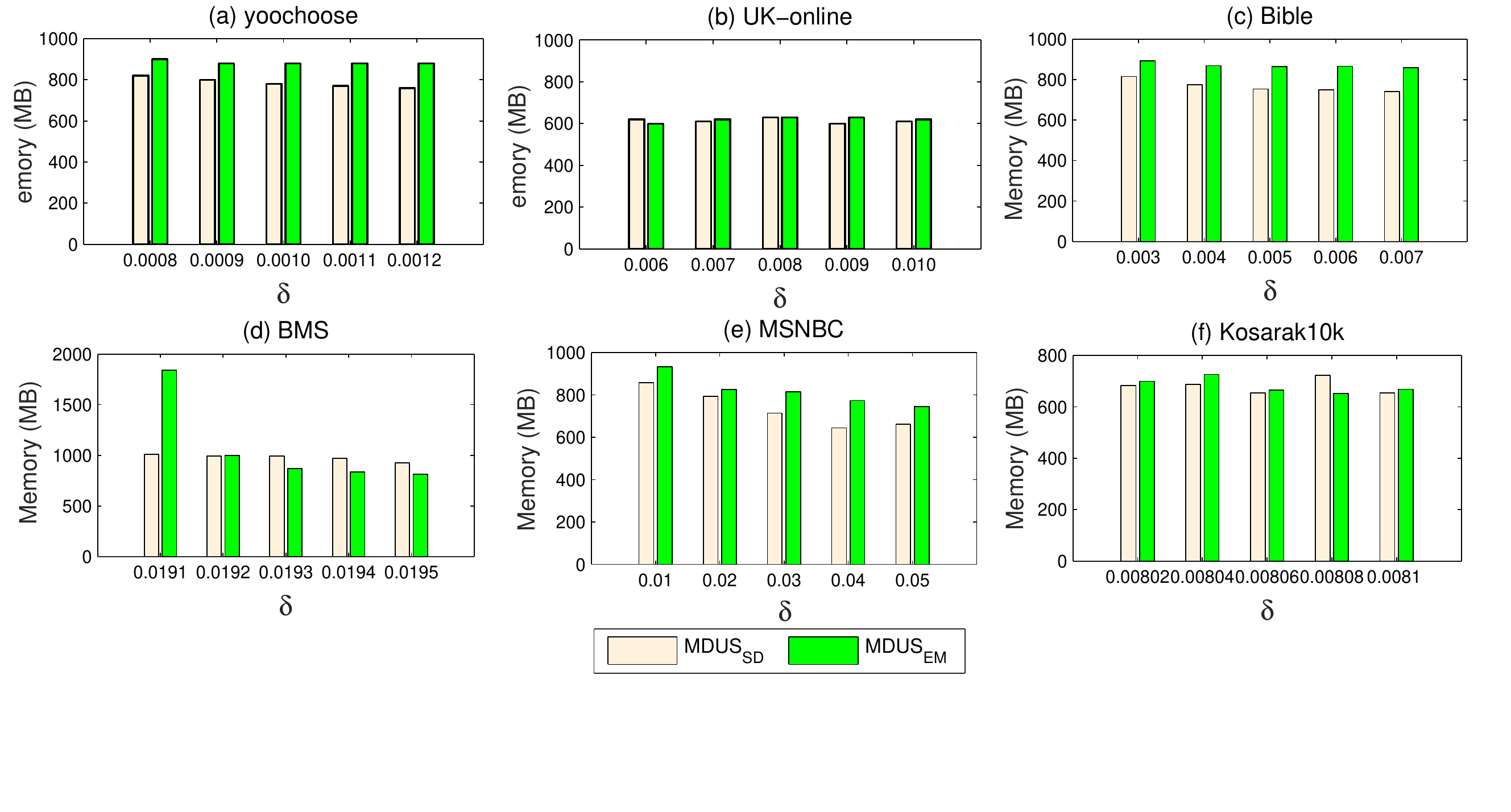}
	\caption{Memory consumption by varying $\delta$.}
	\label{fig-memory1}
\end{figure*}

\subsection{Efficiency Analysis of Runtime}
With the same parameters setting in last subsection, the efficiency w.r.t. runtime of the proposed approaches was compared under various minimum utility thresholds. Results of the runtime are reported in Fig. \ref{fig-runtime1}.

From Fig. \ref{fig-runtime1}, it can be seen that the MDUS$ _\text{SD} $ algorithm performs better than MDUS$ _\text{EM} $ for all datasets. As the minimum utility threshold decreases, the runtime of the two algorithms increases. The difference is that the runtime of MDUS$ _\text{EM} $ increases faster than that of MDUS$ _\text{SD} $, as shown in each runtime with respect to the first value of parameter $\delta$ in x-axis. It is reasonable since the proposed pruning strategy and downward closure property of DHUI-Miner can help to improve the efficiency of MDUS$ _\text{SD} $. In addition, MDUS$ _\text{SD} $ which utilizes the proposed utility-list structure can improve the performance by reducing the number of candidates and the search space. Clearly, with increasing the minimum utility threshold $\delta$, the proposed MDUS$ _\text{EM} $ approach may not return the results within a reasonable time, but the enhanced MDUS$ _\text{SD} $ algorithm can achieve a good performance in terms of runtime.

It is interesting to notice that, in Fig. \ref{fig-runtime1}(d), the runtime of the two algorithms is close to each other. The reason is that if there are always long transactions in a dataset (e.g., BMS dataset), the majority of the runtime is spent to mine the sequential part of the multi-dimensional dataset.

\subsection{Efficiency Analysis of Memory Usage}

In this subsection, the memory consumption is compared for the two algorithms. Results are shown in Fig. \ref{fig-memory1}, where it can be observed that the memory consumption of the two algorithms is similar. But in most cases, the memory consumption of MDUS$ _\text{SD} $ is less than that of MDUS$ _\text{EM} $. For example, on datasets yoochoose and MSNBC, the consumed memory of MDUS$ _\text{SD} $ is always slightly less than that of MDUS$ _\text{EM} $ under various minimum utility thresholds. This observation, to some extent, reflects the results of running time, which can be seen in Table \ref{table:candidate1} and Fig. \ref{fig-runtime1}. This is very interesting and reassuring. The reason is that the number of candidates is calculated differently in the two algorithms. MDUS$ _\text{SD} $ utilizes the pattern join, utility-list, and DHUI-Miner to search the desired patterns. The designed utility-list can be used to avoid the problem of multiple database scans. While MDUS$ _\text{EM} $ first performs the database transformation and then adopts search procedure (c.f. Algorithm 2, which may generate many candidates) to extract mdHUSPs.

More specifically, the number of candidates depends on how it is measured by the proposed algorithms. Note that the set of candidates of MDUS$ _\text{SD} $ includes candidates found in both the sequential part and the dimensional part of the dataset, while the set of candidates of MDUS$ _\text{EM} $ is the candidates of the transformed dataset. Besides, the runtime and memory consumption to process each candidate is much less for MDUS$ _\text{SD} $ than for MDUS$ _\text{EM} $. Thus, the MDUS$ _\text{SD} $ requires less runtime and memory to find the complete set of multi-dimensional high-utility sequential patterns compared to MDUS$ _\text{EM} $.

\subsection{Scalability}

Finally, we study the scalability of the MDUS$ _\text{SD}$ and MDUS$ _\text{EM} $ algorithms by using two measurements: speedup of runtime and memory efficiency. Experiments were conducted on the large-scale synthetic C8S6T4I3D$ | $X$ | $K dataset by changing transactions from 100,000 to 500,000 with the embedding 50 dimensions. In all scalability experiments, we use two parameter settings: the minimum utility threshold  $\delta$: 0.0005 and $\delta$: 0.0008.  Fig. \ref{fig-scalability}(a) to Fig. \ref{fig-scalability}(f) report the results in terms of running time, memory usage and the number of patterns (candidates and final results) with respect to different number of transactions in test dataset. 

\begin{figure*}[!t]
	\centering 
	\includegraphics[trim=20 100 20 0,clip,scale=0.56]{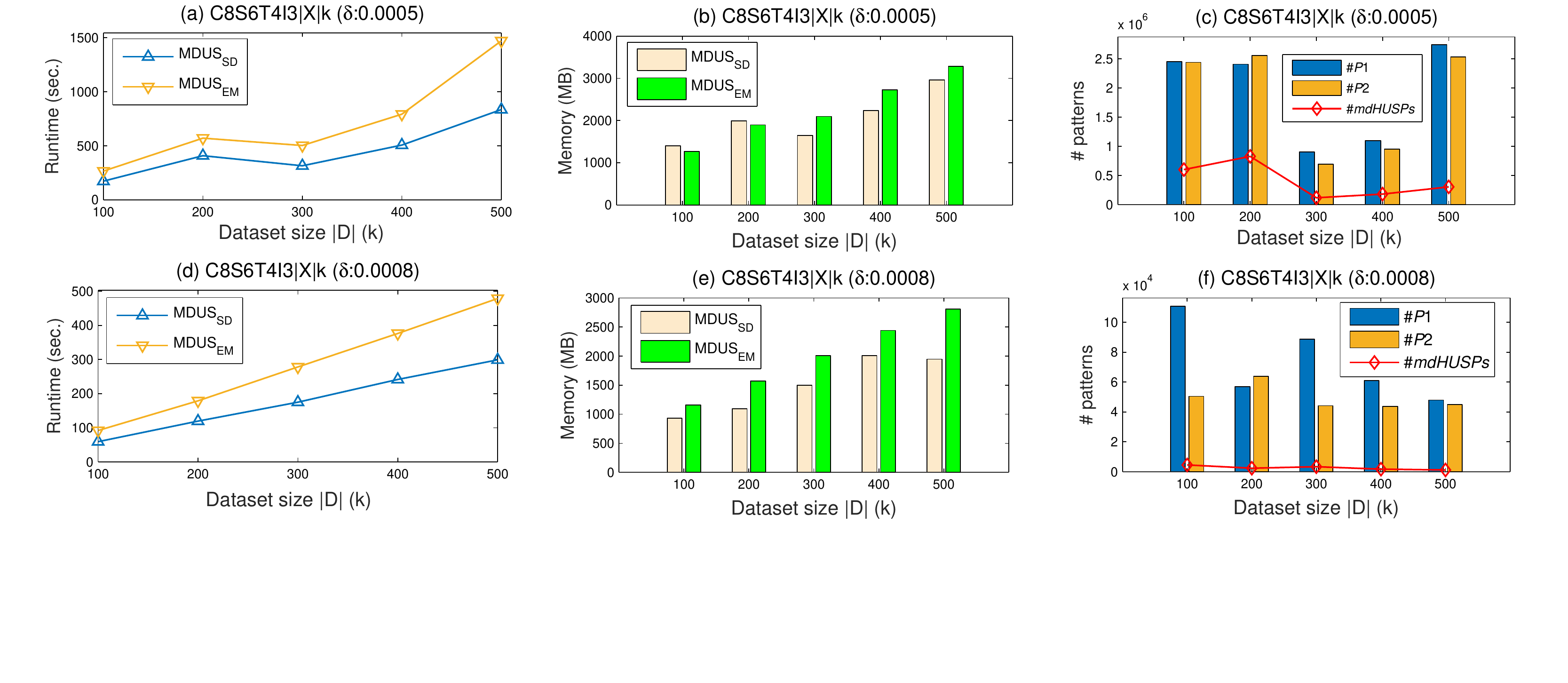}
	\caption{Scalability of the compared algorithms.}
	\label{fig-scalability}
\end{figure*}

Observed from Fig. \ref{fig-scalability}, we can see that proposed technique is highly scalable. It can be clearly seen that the scalability of MDUS$ _\text{SD} $ is always better than that of MDUS$ _\text{EM} $ on large-scale dataset under different parameter settings. Handling the sequential part and dimensional part of the database separately, the MDUS$ _\text{SD} $ algorithm always consumes less runtime and memory than the equivalence transformation-based MDUS$ _\text{EM} $. As the dataset size increases, both the runtime and memory consumption of the two algorithms increase, and the gap on runtime between the compared algorithms increases significantly. But the gap of memory consumption does not always increases as shown in Fig. \ref{fig-scalability}(b) and Fig. \ref{fig-scalability}(e). The reason is that the value $u(QSD) \times \delta$ increases as the size of the dataset is increased. Thus, the number of generated candidates satisfying the threshold may decrease, and this may decrease the memory consumption. In summary, the second developed  MDUS$ _\text{SD}  $ algorithm outperforms the first MDUS$ _\text{EM} $ algorithm in terms of runtime, memory consumption and scalability.

\section{Conclusions and Future Work}
\label{sec:conclusion}

In this paper, we first formulate the problem of utility maximization across multi-dimensional sequences, which having rich features including sequence-order information, utility factor, and the dimension information. We propose a novel framework named MDUS to extract \textbf{\underline{M}}ulti-\textbf{\underline{D}}imensional \textbf{\underline{U}}tility-oriented \textbf{\underline{S}}equential useful patterns. By utilizing the utility theory, MDUS can discover high utility sequential patterns from multi-dimensional sequential databases. In addition, two algorithms named MDUS$ _\text{EM} $ and MDUS$ _\text{SD} $ are further proposed for MDUS with the different data structures and filtering strategies. Extensive experiments have been done on the real-world datasets, and the results have demonstrated the effectiveness and efficiency of the MDUS framework. More specifically, MDUS$ _\text{SD} $ outperforms MDUS$ _\text{EM} $ in terms of runtime, memory consumption and scalability. 

In the future, several advanced issues can be considered such as discover the top-$k$ patterns using MDUS, apply the MDUS framework to deal with big data or other tasks.

% references section
\bibliographystyle{IEEEtran}
\bibliography{main}

% biography section

%%%%%%%%%%%%%%%%%%%%%%%%%%%%%% biography section %%%%%%%%%%%%%%%%%%%%%%%%%%%%%%
%%%%%%%%%%%%%%%%%%%%%%%%%%%%%% biography section %%%%%%%%%%%%%%%%%%%%%%%%%%%%%%
%%%%%%%%%%%%%%%%%%%%%%%%%%%%%% biography section %%%%%%%%%%%%%%%%%%%%%%%%%%%%%%

\vspace{-1cm}
\begin{IEEEbiography}[{\includegraphics[width=1in,height=1.25in,clip,keepaspectratio]{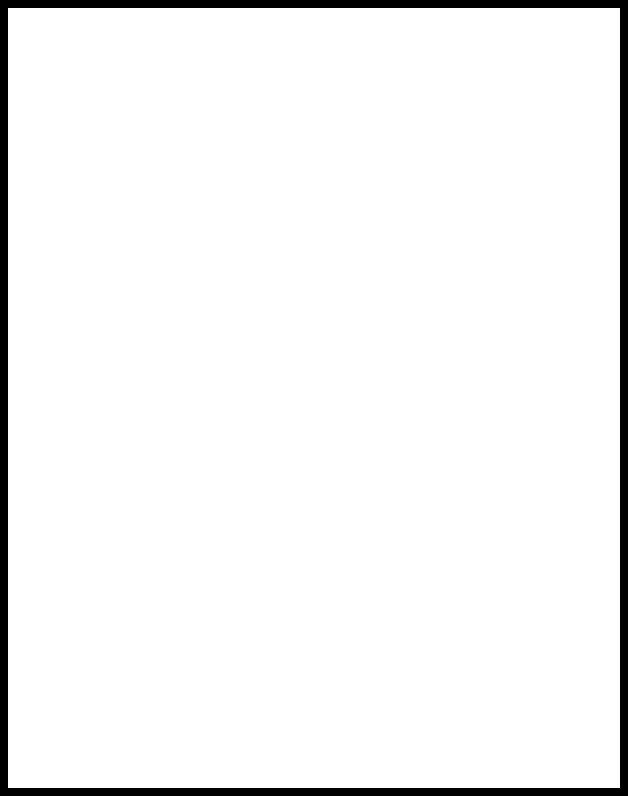}}]{Wensheng Gan} received the Ph.D. in Computer Science and Technology, Harbin Institute of Technology (Shenzhen), Guangdong, China in 2019. He received the B.S. degree in Computer Science from South China Normal University, Guangdong, China in 2013. His research interests include data mining, utility computing, and big data analytics. He has published more than 50 research papers in  peer-reviewed journals and international conferences, which have received more than 450 citations.
\end{IEEEbiography}

\vspace{-1cm}
\begin{IEEEbiography}[{\includegraphics[width=1in,height=1.25in,clip,keepaspectratio]{newAuthor.png}}]{Jerry Chun-Wei Lin}
	is an associate professor at Western Norway University of Applied Sciences, Bergen, Norway. He received the Ph.D. in Computer Science and Information Engineering, National Cheng Kung University, Tainan, Taiwan in 2010. His research interests include data mining, big data analytics, and social network. He has published more than 250 research papers in peer-reviewed international conferences and journals, which have received more than 2500 citations. He is the co-leader of the popular SPMF open-source data mining library and the Editor-in-Chief (EiC) of the \textit{Data Mining and Pattern Recognition} (DSPR) journal, and Associate Editor of \textit{Journal of Internet Technology}. 
\end{IEEEbiography}

\vspace{-1cm}
\begin{IEEEbiography}[{\includegraphics[width=1in,height=1.25in,clip,keepaspectratio]{newAuthor.png}}]{Jiexiong Zhang}
	is currently a senior software engineer in Didi Chuxing, Beijing, China. He received the M.S. degrees in Computer Science from Harbin Institute of Technology (Shenzhen), Guangdong, China in 2017. His research interests include data mining, artificial intelligence, and big data analytics. 
\end{IEEEbiography}

\vspace{-1cm}
\begin{IEEEbiography}[{\includegraphics[width=1in,height=1.25in,clip,keepaspectratio]{newAuthor.png}}]{Hongzhi Yin}
	is a senior lecturer and ARC DECRA Fellow with the University of Queensland, Australia. He received the PhD degree in computer science from Peking University in 2014. His research interests include recommender system, user profiling, topic models, deep learning, social media mining, and location-based services. He has published over 90 papers in the most prestigious journals (i.e., TKDE, TCYB, ACM TOIS, ACM TIST, ACM TKDD) and conferences (i.e., SIGMOD, KDD, VLDB, ICDE, WSDM). Besides, he has one monograph published by Springer.
	
\end{IEEEbiography}

\vspace{-1cm}
\begin{IEEEbiography}[{\includegraphics[width=1in,height=1.25in,clip,keepaspectratio]{newAuthor.png}}]{Philippe Fournier-Viger}
	is full professor and Youth 1000 scholar at the Harbin Institute of Technology (Shenzhen), Shenzhen, China. He received a Ph.D. in  Computer Science at the University of Quebec in Montreal (2010). His research interests include pattern mining, sequence analysis and prediction, and social network mining. He has published more than 200 research papers in refereed international conferences and journals. He is the founder of the popular SPMF open-source data mining library, which has been cited in more than 700 research papers. He is Editor-in-Chief (EiC) of the \textit{Data Mining and Pattern Recognition} (DSPR) journal.
\end{IEEEbiography}

\vspace{-1cm}
\begin{IEEEbiography}[{\includegraphics[width=1in,height=1.25in,clip,keepaspectratio]{newAuthor.png}}]{Han-Chieh Chao}
	has been the president of National Dong Hwa University since February 2016. He received M.S. and Ph.D. degrees in Electrical Engineering from Purdue University in 1989 and 1993, respectively. His research interests include high-speed networks, wireless networks, IPv6-based networks, and artificial intelligence. He has published nearly 500 peer-reviewed professional research papers. He is the Editor-in-Chief (EiC) of IET Networks and \textit{Journal of Internet Technology}. Dr. Chao has served as a guest editor for ACM MONET, IEEE JSAC, \textit{IEEE Communications Magazine}, \textit{IEEE Systems Journal}, \textit{Computer Communications}, \textit{IEEE Proceedings Communications}, \textit{Wireless Personal Communications}, and \textit{Wireless Communications \& Mobile Computing}. Dr. Chao is an IEEE Senior Member and a fellow of IET. 
\end{IEEEbiography}

\vspace{-2.5cm}
\begin{IEEEbiography}[{\includegraphics[width=1in,height=1.25in,clip,keepaspectratio]{newAuthor.png}}]{Philip S. Yu }
	received the B.S. degree in electrical engineering from National Taiwan University, M.S. and Ph.D. degrees in electrical engineering from Stanford University, and an MBA from New York University. He is a distinguished professor of computer science with the University of Illinois at Chicago (UIC) and also holds the Wexler Chair in Information Technology at UIC. Before joining UIC, he was with IBM, where he was manager of the Software Tools and Techniques Department at the Thomas J. Watson Research Center. His research interests include data mining, data streams, databases, and privacy. He has published more than 1,300 papers in peer-reviewed journals (i.e., TKDE, TKDD, VLDBJ, ACM TIST) and conferences (KDD, ICDE, WWW, AAAI, SIGIR, ICML, etc). He holds or has applied for more than 300 U.S. patents. Dr. Yu was the Editor-in-Chief of \textit{ACM Transactions on Knowledge Discovery from Data}. He received the ACM SIGKDD 2016 Innovation Award, and the IEEE Computer Society 2013 Technical Achievement Award. Dr. Yu is a fellow of the ACM and the IEEE.
\end{IEEEbiography}

\end{document}